\pgfplotsset{width=7cm, compat=1.10}
\newcommand{\bin}{{\{0,1\}}}
\newcommand{\Note}[1]{{\color{red}#1}}
\newcommand{\eqdef}{\stackrel{{\sf def}}{=}}
\DeclareMathOperator{\E}{\mathbb{E}}
\newcommand{\D}{\mathsf{D}}
\renewcommand{\H}{{\sf H}}
\newcommand{\Hmin}{\H_\infty}
\newcommand{\Hmins}[1]{\Hmin^{#1}}
\newcommand{\Hhill}[1]{\H^{\sf HILL}_{#1}}
\newtheorem{corollary}{Corollary}
\newtheorem{lemma}{Lemma}
\crefname{table}{table}{tables}
\Crefname{table}{Table}{Tables}
\crefname{figure}{figure}{figures}
\Crefname{figure}{Figure}{Figures}
\Crefname{tikzpicture}{Figure}{Figures}
\crefname{section}{section}{sections}
\Crefname{section}{Section}{Sections}
\crefname{claim}{claim}{claims}
\Crefname{claim}{Claim}{Claims}
\Crefname{algorithm}{Algorithm}{Algorithms}
\crefname{lemma}{Lemma}{Lemmas}
\Crefname{lemma}{Lemma}{Lemmas}
\Crefname{corollary}{Corollary}{Corollaries}
\crefname{corollary}{Corollary}{Corollaries}
\Crefname{theorem}{Theorem}{Theorems}
\Crefname{definition}{Definition}{Definitions}
\author[1]{Krzysztof Pietrzak\thanks{Supported by the European Research Council, ERC consolidator grant (682815 - TOCNeT).}}
\affil[1]{IST Austria\\
  \href{mailto:pietrzak@ist.ac.at}{pietrzak@ist.ac.at}}
\author[2]{Maciej Skorski\thanks{Supported by the European Research Council, ERC consolidator grant (682815 - TOCNeT).}}
\affil[2]{IST Austria\\
  \href{mailto:mskorski@ist.ac.at}{mskorski@ist.ac.at}}
\authorrunning{K.\,Pietrzak and M.\,Skorski} 
\subjclass{F.1.3 Complexity Measures and Classes}
\keywords{pseudoentropy, non-uniform attacks}
\title{Non-Uniform Attacks Against Pseudoentropy\footnote{The full version is available at \url{https://arxiv.org/abs/1704.08678}}}
\begin{document}
\maketitle


\begin{abstract}
De, Trevisan and Tulsiani [CRYPTO 2010] show that every distribution over $n$-bit strings which has constant statistical distance to uniform (e.g., the output of a pseudorandom generator mapping $n-1$ to $n$ bit strings), can be distinguished from the uniform 
distribution with advantage $\epsilon$ by a circuit of size $O( 2^n\epsilon^2)$.

We generalize this result, showing that a distribution which has less than $k$ bits of min-entropy, can be distinguished from any distribution with $k$ bits of $\delta$-smooth min-entropy with advantage 
$\epsilon$ by a circuit of size $O(2^k\epsilon^2/\delta^2)$. 
As a special case, this implies that any distribution with support at most $2^k$ (e.g., the output of a pseudoentropy generator 
mapping $k$ to $n$ bit strings) can be distinguished from any 
given distribution with min-entropy $k+1$ with advantage $\epsilon$ by a circuit of size $O(2^k\epsilon^2)$.

Our result thus shows that pseudoentropy distributions face basically the same non-uniform attacks 
as pseudorandom distributions.

\end{abstract}

\section{Introduction}

De, Trevisan and Tulsiani~\cite{DeTT10} show a non-uniform attack against any pseudorandom generator 
(PRG) which maps $\bin^{n-1}\rightarrow \bin^n$. For any $\epsilon\ge 2^{-n/2}$, their attack achieves distinguishing advantage $\epsilon$ and can be realized by a circuit of size  $O\left( 2^{n}\epsilon^2\right)$. Their attack doesn't even
need the PRG to be efficiently computable.

In this work we consider a more general question, where we ask for attacks distinguishing a distribution from any distribution with slightly higher min-entropy. We generalize~\cite{DeTT10}, showing a non-uniform attack which, for any $\epsilon,\delta>0$, distinguishes any distribution with $<k$ bits of min-entropy from any distribution with $k$ bits of $\delta$-smooth min-entropy 
with advantage $\epsilon$, and where the distinguisher is of size  $O(2^{k}\epsilon^2/\delta^2)$.  
As a corollary we recover the~\cite{DeTT10} result, showing that the output of any pseudoentropy generator $\bin^{k}\rightarrow \bin^n$ can be distinguished from 
any variable with min-entropy $k+1$ with advantage $\epsilon$ by circuits of size $O(2^k\epsilon^2)$.
\begin{itemize}
 \item From a theoretical perspective, we prove where the separation between pseudoentropy and smooth min-entropy lies, by classifying how powerful computationally bounded adversaries can be so they can still be fooled to ``see'' more entropy than there really is.
 \item From a more practical perspective, our result shows that using pseudoentropy instead of pseudorandomness (which 
 for many applications is sufficient and allows for saving in entropy \emph{quantity}~\cite{DodisPietrzakWichs2013}), will not give improvements in terms of \emph{quality} (i.e., the size and advantage of distinguishers considered), at least not against generic non-uniform attacks.
\end{itemize}

\subsection{Notation and Basic Definitions}
Two variables $X$ and $Y$ are $(s,\epsilon)$ indistinguishable, denoted $X\sim_{s,\epsilon}Y$, if for all boolean circuits $D$ 
of size $|D|\le s$ we have $|\Pr[D(X)=1]-\Pr[D(Y)=1]|\le \epsilon$. The statistical distance 
of $X$ and $Y$ is $d_1(X;Y)\eqdef\sum_x|P_X(x)-P_Y(x)|$ (where $P_X(x)\eqdef \Pr[X=x]$), 
the Euclidean distance of $X$ and $Y$ is $d_2(P_X;P_Y) \eqdef \sqrt{\sum_{x}(P_X(x)-P_Y(x))^2}$.
A variable $X$ has min-entropy $k$ if it doesn't take any particular outcome 
with probability greater $2^{-k}$, it has $\delta$-smooth min-entropy $k$~\cite{DBLP:conf/asiacrypt/RennerW05}, if it's $\delta$ close to some distribution with min-entropy $k$. $X$ has $k$ bits of HILL pseudoentoentry of quality $(s,\epsilon)$ if there exists a $Y$ with min-entropy $k$ that is $(s,\epsilon)$ indistinguishable from $X$, we use the following standard notation for these notions
\begin{description}
\item{min-entropy: }$\Hmin(X)\eqdef -\log\max_x\left( \Pr[X=x]\right)\ .$
\item{smooth min-entropy: }$\Hmins{\delta}(X)\eqdef \max_{Y,d_1(X;Y)\le \delta}\Hmin(Y)\ .$
\item{HILL pseudoentropy: }$\Hhill{s,\epsilon}(X)\eqdef \max_{Y,Y\sim_{(s,\epsilon)}X}\Hmin(Y)\ .$
\end{description}

\subsection{Our Contribution}
In this work give generic non-uniform attacks on pseudoentropy distributions. 
A seemingly natural goal is to consider a distribution $X$ with $\Hmin(X)\le k$ bits of min-entropy, strictly larger $\Hhill{s,\epsilon}(X)\ge k+1$ bits of  HILL entropy, and then give an upper bound on $s$ in terms of $\epsilon$. 
This does not work as there are $X$ where $\Hmin(X)\ll \Hmins{\delta}(X)$,\footnote{Consider 
an $X$ which is basically uniform over $\bin^n$, but has mass $\delta$ on one particular point, then 
$\log\delta^{-1}=\Hmin(X)\ll \Hmins{\delta}(X)=n$.} and as by definition $\Hmins{\delta}(X)=\Hhill{\infty,\delta}(X)$, 
we can have a large entropy gap $\Hhill{\infty,\delta}(X)-\Hmin(X)$ even when considering unbounded adversaries against 
HILL entropy. For this reason, in our main technical result~\Cref{thm:main} below, we must consider distributions 
with bounded \emph{smooth} min-entropy. This makes the statement of the lemma somewhat technical. 
In practice, the distributions considered often have bounded support, for example because they were generated from a short seed by a deterministic process (like a pseudorandom generator). In this case we can drop the smoothness requirement as stated in \Cref{cor:main} below.

\begin{lemma}[Nonuniform attacks against pseudoentropy]\label{thm:main}
Suppose that $X\in\bin^n$ does not have $
k$ bits of $\delta$-smooth min-entropy, i.e., $\Hmins{\delta}(X)<k$, then for any $\epsilon$ we have 
\begin{align*}
\Hhill{\tilde{O}( 2^k \epsilon^2\delta^{-2}),\epsilon}(X) < k
\end{align*}
where $\tilde{O}(\cdot)$ hides a factor linear in $n$.
\end{lemma}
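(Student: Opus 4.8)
The plan is to unfold the definition of HILL pseudoentropy and argue directly: it suffices to show that for \emph{every} $Y\in\bin^n$ with $\Hmin(Y)\ge k$ there is a circuit $D$ of size $\tilde{O}(2^k\epsilon^2\delta^{-2})$ with $|\Pr[D(X)=1]-\Pr[D(Y)=1]|>\epsilon$ (the circuit is allowed to depend on $Y$, since the attack is non-uniform). Fix such a $Y$ and set $a_z\eqdef P_X(z)-P_Y(z)$, so that $\sum_z a_z=0$ and $\sum_z|a_z|=d_1(X;Y)$. Two preliminary observations: (i) if $|a_z|>\epsilon$ for some $z$, the singleton test ``is the input $=z$?'' already distinguishes with advantage $>\epsilon$ using a circuit of size $O(n)$, so we may assume $|a_z|\le\epsilon$ for all $z$; (ii) since $\Hmins{\delta}(X)<k$ says that $X$ has $d_1$-distance $>\delta$ to every distribution with min-entropy $\ge k$ --- in particular to the one obtained from $X$ by capping all weights at $2^{-k}$ and redistributing --- the set $B\eqdef\{z:P_X(z)>2^{-k}\}$ satisfies $|B|<2^k$ and $\sum_{z\in B}(P_X(z)-2^{-k})>\delta/2$.

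The one place the hypothesis enters quantitatively is a lower bound on the Euclidean distance that is uniform in $Y$. For $z\in B$ we have $a_z\ge P_X(z)-2^{-k}>0$ because $P_Y(z)\le 2^{-k}$, so by Cauchy--Schwarz
\begin{align*}
d_2(P_X;P_Y)^2\ \ge\ \sum_{z\in B}a_z^2\ \ge\ \frac{\bigl(\sum_{z\in B}(P_X(z)-2^{-k})\bigr)^2}{|B|}\ >\ \frac{\delta^2}{2^{k+2}}\,.
\end{align*}
This holds for all admissible $Y$ at once; everything after this point is distribution-agnostic ``$L_2$-to-$L_1$ on a hashed-down domain'' and is essentially the hashing argument of De--Trevisan--Tulsiani applied to $P_X-P_Y$ in place of $P_X-U$.

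Concretely, let $M\eqdef\tilde{\Theta}(2^k\epsilon^2\delta^{-2})$ (the hidden factor, about $n$, is needed below) and let $h\colon\bin^n\to[M]$ be drawn from a family that is uniform on each input and $\Theta(\log M)$-wise independent. A direct second-moment computation using pairwise independence gives $\E_h\bigl[d_2(P_{h(X)};P_{h(Y)})^2\bigr]=(1-1/M)\,d_2(P_X;P_Y)^2$; since $d_2(P_{h(X)};P_{h(Y)})^2\le 2$ always, a reverse Markov inequality yields some $h$ with $d_2(P_{h(X)};P_{h(Y)})^2=\tilde{\Omega}(\delta^2/2^k)$. For that same $h$, Chernoff bounds over the $M$ buckets show that, after removing the few inputs $z$ where $|a_z|$ exceeds the per-bucket fluctuation scale and handling them by an auxiliary construction, the largest bucket weight of $P_{h(X)}-P_{h(Y)}$ is only $\tilde{O}\bigl(d_2(P_{h(X)};P_{h(Y)})/\sqrt{M}\bigr)$. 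Since $\sum_i|v_i|\ge(\sum_i v_i^2)/\max_i|v_i|$ for any real vector, this forces $d_1(h(X);h(Y))=\tilde{\Omega}\bigl(\sqrt{M}\cdot\delta\,2^{-k/2}\bigr)$, which by the choice of $M$ is $\ge 2\epsilon$ after fixing constants. Taking $T\subseteq[M]$ to be the optimal Boolean test between $h(X)$ and $h(Y)$, the circuit ``is $h(\cdot)\in T$?'' (together with the auxiliary part) distinguishes $X$ from $Y$ with advantage $\ge\tfrac12 d_1(h(X);h(Y))>\epsilon$, and --- evaluating $h$, doing a table lookup into $T$, and running the auxiliary construction --- has size $\tilde{O}(M)+\mathrm{poly}(n)=\tilde{O}(2^k\epsilon^2\delta^{-2})$.

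I expect the hard part to be the passage from the $L_2$ estimate to the $L_1$ estimate: one must show the hashed difference $P_{h(X)}-P_{h(Y)}$ is spread out enough over the $M$ buckets that $d_1(h(X);h(Y))\approx\sqrt{M}\,d_2(h(X);h(Y))$ rather than merely $\ge d_2(h(X);h(Y))$. This is what forces $\Theta(\log M)$-wise independence of $h$ and a separate, slightly delicate treatment of the heaviest weights of $P_X$ and $P_Y$ (a short case analysis: either those points already yield a distinguisher on their own, or one applies the same hashing argument restricted to them), and it is the origin of the factor linear in $n$, which absorbs the $\sqrt{\log M}=\tilde{O}(\sqrt n)$ loss incurred in the flatness/tail bound. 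By contrast, the definitional unfolding, the two preliminary reductions, and the Cauchy--Schwarz step are all short.
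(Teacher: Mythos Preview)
Your overall shape matches the paper: fix $Y$ with $\Hmin(Y)\ge k$, use the hypothesis via Cauchy--Schwarz on the ``heavy'' set $B=\{z:P_X(z)>2^{-k}\}$ to get $d_2(P_X;P_Y)=\Omega(\delta\,2^{-k/2})$, and then amplify by hashing into $\Theta(2^k\epsilon^2\delta^{-2})$ cells followed by a table lookup. That first step and the final size/advantage arithmetic are correct and essentially identical to the paper's \Cref{cor:no_smooth_implies_euclidean_distance} and \Cref{cor:tradeoof}.

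The gap is in your $L_2$-to-$L_1$ passage. You write that reverse Markov yields \emph{some} $h$ with $d_2(P_{h(X)};P_{h(Y)})^2=\tilde\Omega(\delta^2/2^k)$, and that ``for that same $h$'' Chernoff controls $\max_i|w_i|$. But Chernoff is a statement over the randomness of $h$; once $h$ is fixed there is nothing left to average over. To intersect the two events you would need the reverse-Markov event to have probability exceeding the Chernoff failure probability, yet since $d_2(P_{h(X)};P_{h(Y)})^2\le 2$ always while its mean is only $\Theta(\delta^2/2^k)$, reverse Markov guarantees probability merely $\Omega(\delta^2/2^k)$ --- far too small. Your heavy-weight case split does not rescue this either: the reduction $|a_z|\le\epsilon$ is much weaker than the per-bucket scale $d_2/\sqrt{M}=\Theta(\delta^2 2^{-k}\epsilon^{-1})$ you would actually need for a flatness bound, and the parenthetical ``apply the same hashing argument restricted to them'' is circular.

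The paper sidesteps all of this with one extra random object: a $4$-wise independent $\pm1$ mask $\D$, so that the $i$-th slice contribution $\sum_{x:h(x)=i}(P_X(x)-P_Y(x))\D(x)$ is a \emph{zero-mean} random walk in which the increments themselves (not just their sum) are centred. One can then work entirely with second and fourth moments: Marcinkiewicz--Zygmund bounds each $\E|\mathsf{Adv}_{S_i}^{\D}|$ from below and each $\E(\mathsf{Adv}_{S_i}^{\D}\mathsf{Adv}_{S_j}^{\D})$ from above, and Paley--Zygmund on $Z=\sum_i\mathsf{Adv}_{S_i}^{\D}$ gives $Z=\Omega(T^{1/2}d_2(P_X;P_Y))$ with \emph{constant} probability jointly over $\D$ and $h$. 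No Chernoff, no heavy-point excision, and only $4$-wise (not $\Theta(\log M)$-wise) independence is needed for both $h$ and $\D$; this is also why the hidden factor is genuinely linear in $n$ rather than the $\sqrt{\log M}$ loss you anticipated.
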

\begin{theorem}\label{cor:main}
Let $f:\bin^k\rightarrow\bin^n$ be a deterministic (not necessarily efficient) function. Then we have 
\begin{align*}
 \Hhill{\tilde{O}( 2^k \epsilon^2),\epsilon}(f(U_k)) \le k+1.
\end{align*}
more generally, for any $X$ over $\bin^n$ with support of size $\le 2^k$
\begin{align*}
 \Hhill{\tilde{O}( 2^k \epsilon^2),\epsilon}(X) \le k+1.
\end{align*}
\end{theorem}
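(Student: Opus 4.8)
The plan is to deduce \Cref{cor:main} directly from \Cref{thm:main}, using the elementary fact that a distribution supported on at most $2^k$ points is far, in statistical distance, from every distribution of min-entropy $k+1$, so that its $\delta$-smooth min-entropy already drops below $k+1$ for a constant $\delta$. Since the support of $f(U_k)$ is $f(\bin^k)$ and hence has size at most $2^k$, the first assertion of \Cref{cor:main} is a special case of the second, so it suffices to treat an arbitrary $X$ over $\bin^n$ with $|\mathrm{supp}(X)|\le 2^k$.

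First I would show that $\Hmins{1/2}(X)<k+1$. Write $S=\mathrm{supp}(X)$, so $|S|\le 2^k$, and let $Y$ be any distribution with $\Hmin(Y)\ge k+1$, i.e.\ $P_Y(y)\le 2^{-(k+1)}$ for every $y$. Then $P_Y(S)\le |S|\cdot 2^{-(k+1)}\le \tfrac12$, and therefore
\begin{align*}
d_1(X;Y)\ \ge\ \Big|\sum_{x\in S}\bigl(P_X(x)-P_Y(x)\bigr)\Big|+\sum_{x\notin S}P_Y(x)\ =\ 2\bigl(1-P_Y(S)\bigr)\ \ge\ 1\ >\ \tfrac12,
\end{align*}
where the inequality uses the triangle inequality on the first sum and $P_X(x)=0$ for $x\notin S$ on the second. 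Hence no distribution of min-entropy at least $k+1$ lies within statistical distance $\tfrac12$ of $X$, which is exactly the statement $\Hmins{1/2}(X)<k+1$.

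Finally I would invoke \Cref{thm:main}, with the threshold $k$ there instantiated to $k+1$ and with $\delta=\tfrac12$. The hypothesis $\Hmins{\delta}(X)<k+1$ has just been verified, so the lemma gives
\begin{align*}
\Hhill{\tilde{O}(2^{k+1}\epsilon^2(1/2)^{-2}),\epsilon}(X)\ <\ k+1 .
\end{align*}
Since $2^{k+1}(1/2)^{-2}=8\cdot 2^k$, the constant and the $\delta^{-2}$ factor are absorbed into $\tilde{O}(\cdot)$, so the circuit size is $\tilde{O}(2^k\epsilon^2)$, and $<k+1$ in particular yields $\le k+1$, the claimed bound. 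This closing bookkeeping — the factor-of-two convention in the definition of $d_1$, the shift $k\mapsto k+1$, and the $\delta^{-2}$ term — is the only thing requiring care; there is no genuine obstacle. The substance of \Cref{cor:main} is the conceptual point that ``bounded support'' is a clean, smoothness-free sufficient condition for the hypothesis of \Cref{thm:main}, which is precisely why that lemma was phrased in terms of smooth min-entropy rather than plain min-entropy.
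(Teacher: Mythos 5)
Your proof is correct and follows essentially the same route as the paper: reduce to \Cref{thm:main} by showing a bounded-support distribution has $\delta$-smooth min-entropy below $k+1$ for $\delta=1/2$. If anything your version is slightly tidier, since by explicitly computing $d_1(X;Y)\ge 2(1-P_Y(S))\ge 1 > 1/2$ you make it clear that the strict inequality $\Hmins{1/2}(X)<k+1$ (rather than merely $\le k+1$, as the paper's one-line argument literally states) holds, which is what \Cref{thm:main} requires.
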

\begin{remark}[Concluding best attacks against PRGs]
For the special case $n=k+1$ we recover the bound for 
pseudo\emph{random} generators from~\cite{DeTT10}.
\end{remark}
\begin{proof}[Proof of~\Cref{cor:main}]
The theorem follows from~\Cref{thm:main} when $\delta=1/2$; consider any $X$ with support of size $\le 2^k$, then 
$\Hmins{\delta}(X)\le k+1$, as no matter how we cut probability mass of $1-\delta=1/2$ over $
2^k$ elements, one element will have the weight at least $2^{-k-1}$.
\end{proof}
\subsection{Proof Outline}

\subsubsection{A Weaker Result as a Ball-Bins Problem}

We outline the proof of a somewhat weakened version of \Cref{cor:main} in the language of balls and bins. For every $Y$ of min-entropy $k'=k+\Omega(1)$ we want to distinguish
$Y$ from $X = f(U_k)$. Suppose for simplicity that $Y$ is flat and $f$ is injective, so that $X$ is also flat.
Our strategy will be to hash the points randomly into two bins and take advantage of the fact that the \emph{average maximum load}
is closer to $\frac{1}{2}$ when we sample from $Y$ than when drawing from $X$. The reason is that $Y$ has more balls, so by the law of large numbers, we expect the load to be ``more concentrated'' around the mean.

Think of throwing balls (inputs $x$) into two bins (labeled by $-1$ and $1$).
If the balls come from the support of $X$, the expected maximum load (over two bins) 
equals $\approx 2^{k-1} + \sqrt{2/\pi}\cdot 2^{k/2}$. Similarly, 
if the balls come from the support of $Y$, then maximum load is $2^{k'-1} + \sqrt{2/\pi}\cdot 2^{k'/2}$.
In terms of the average load (the load normalized by the total number of balls)
\begin{align*}
 \mathsf{AverageMaxLoad}(X) \approx 0.5 + \sqrt{2/\pi}\cdot 2^{-k/2} \quad \text{ w.h.p. when drawing from } X \\
 \mathsf{AverageMaxLoad}(Y) \approx 0.5 + \sqrt{2/\pi}\cdot 2^{-k'/2} \quad \text{ w.h.p. when drawing from } Y 
\end{align*}
As $k'=k+\Omega(1)$ we obtain (with good probability)
\begin{align*}
  \mathsf{AverageMaxLoad}(X) -  \mathsf{AverageMaxLoad}(Y) =  \Omega(2^{-k/2}).
\end{align*}
Letting $\D$ be one of these bins assignments we obtain a distinguisher with advantage $\epsilon = \Omega(2^{-k/2})$.
To generate the assignments efficiently we relax the assumption about choosing bins and assume 
only that the choices of bins are independent for any group of $\ell=4$ balls. The fourth moment method allows us to keep sufficiently good
probabilistic guarantees on the maximum load.

\subsubsection{The General Case by Random Walk Techniques}

\paragraph{A high-level outline and comparison to \cite{DeTT10}}
Below in \Cref{fig:outline} we sketch the flow of our argument. 

\begin{figure}[h!]
\resizebox{0.99\textwidth}{!}{
\begin{tikzpicture}
 \node[draw, rectangle] (small_entr) at (0,0) {$X$ has no smooth-min entropy $k$};
 \node[draw, rectangle, align=center] (mass_conc) at (0,-2) {large bias between $X$ and $Y$ on \emph{only} $2^k$ elements }; 
 \node[draw, rectangle, align=center] (big_dist) at (0,-4) {$d_2(X;Y)=\Omega(2^{-\frac{k}{2}})$ (Euclidean distance)}; 
\node[draw, rectangle, align=center] (random_attack) at (13.0,-4) {advantage of random attack \\ $\epsilon \approx d_2(X;Y) $ \\
for any $X,Y$ (\Cref{lemma:random_attack_advantage})}; 
\node[draw, rectangle, align=center] (low_complex_attack) at (5,-6) {a random distinguisheer $\D$ achieves $\epsilon = \Omega\left( 2^{-\frac{k}{2}} \right) $ }; 
\node[draw, rectangle, align=center] (slice) at (5,-8) {$\epsilon = \Omega\left( T^{-\frac{1}{2}} 2^{-\frac{k}{2}} \right)$ for a random $\D$ restricted to one slice };
\node[draw, rectangle, align=center] (slices_together) at (5,-10) {$\epsilon = T\cdot \Omega\left( T^{-\frac{1}{2}} 2^{-\frac{k}{2}} \right)$ by composing
advantages from all slices \\ (needs $O(T)$ advice)
}; 
\node[draw, rectangle, align=center] (continous_threshold) at (5,-12) {arbitrary $\epsilon$ in size $2^{\frac{k}{2}}\epsilon$ (by manipulating $T$)};
\node[draw, rectangle, align=center] (relaxing_randomness) at (5,-14) {weak randomness for distinguishers and slices is enough \\ ($4$-wise independence works!)};

\node[rectangle, draw, dotted, inner sep = 3mm, fit=(slice) (slices_together)] (slices_comment) {};
\node at (slices_comment.north west) {domain partitioned \emph{randomly} into $T$ slices};
\node[rectangle, draw, dotted, inner sep = 3mm, fit=(mass_conc) (relaxing_randomness) (slices_together) (slice)] (Y_comment) {};
\node at (Y_comment.north east) {for any \emph{fixed} $Y$ of min-entropy at least $k$};
\draw[->] (continous_threshold)  -- (relaxing_randomness) node[midway, right, align=center]{random walks moment inequalities \\ (see \Cref{sec:random_walks,sec:interpol,sec:anticon_bounds})}; 
\draw[->]  (slices_together) -- (continous_threshold) node[midway, right] {\Cref{cor:tradeoof}}; 
\draw[->] (slice) -- (slices_together) node[midway, right]{\Cref{cor:parition_advantage,cor:parition_advantage_final}};
\draw[->] (low_complex_attack) -- (slice) node[midway,right]{\Cref{cor:slice_advantages}};
\draw[->] (small_entr) -- (mass_conc) node[midway, right]{\Cref{lemma:smooth_charact,cor:no_smoothentropy_implies_bias_to_entropy}};
\draw[->] (mass_conc) -- (big_dist) node[midway, right]{\Cref{lemma:bias_to_euclidean,cor:no_smooth_implies_euclidean_distance}
};
\draw[->] (big_dist) -| (low_complex_attack) node[above right = 0.5cm and -3.7cm of low_complex_attack]{\Cref{cor:random_attack_advantage}};
\draw[->] (random_attack) -| (low_complex_attack);
\end{tikzpicture}
}
\caption{The map of our proof.}
\label{fig:outline}
\end{figure}
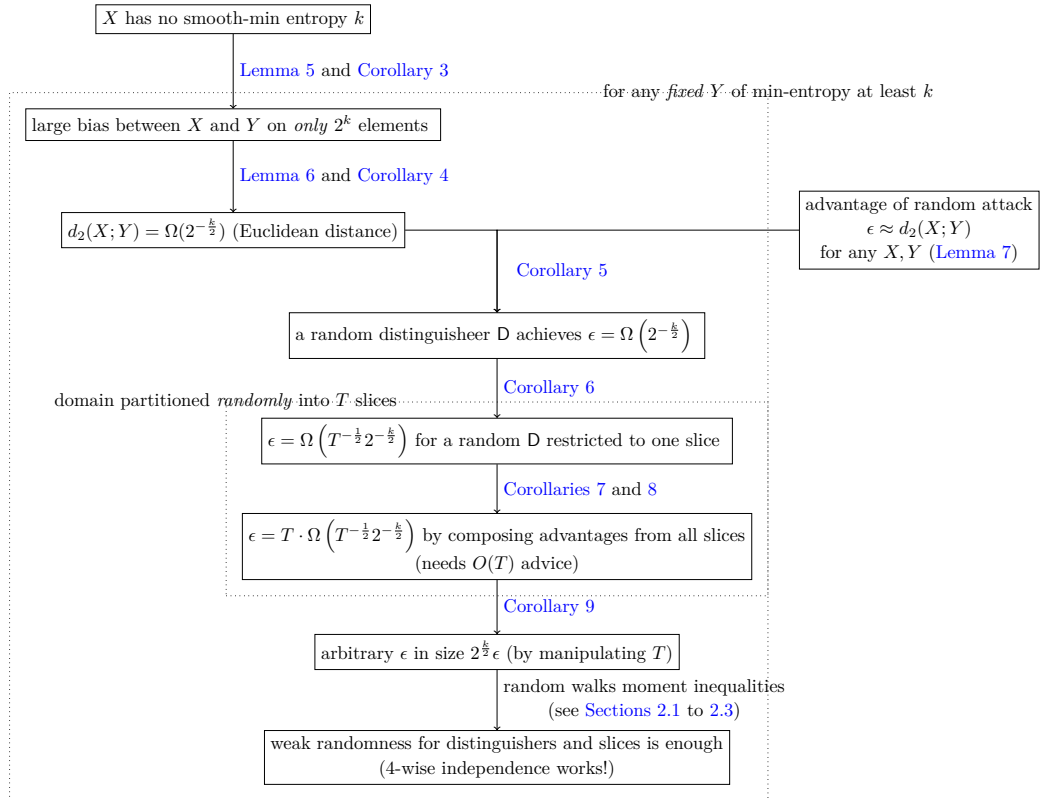
Our starting point is the proof from~\cite{DeTT10}. They use the fact that a random mapping $\D:\bin^n\rightarrow \{-1,1\}$ likely distinguishes
any two distributions $X$ and $Y$ over $\bin^n$ with advantage being the Euclidean distance $d_2(X;Y)\eqdef \sqrt{ \sum_{x}(P_X(x)-P_Y(x))^2 }$.


For any $X$ and $Y$ with constant statistical distance $\sum_{x}|P_X(x)-P_Y(x)|=\Theta(1)$ (which is the case for the PRG setting where $Y=U_n$ and 
$X=PRG(U_{n-1})$) 
this yields a bound $\Omega\left(2^{-\frac{n}{2}}\right)$. This bound can be then amplified, at the cost of extra advice, by partitioning the domain $\bin^n$ and combining
corresponding advantages (advice basically encodes if there is a need for flipping the output). Finally one can show that
$4$-wise independence provides enough randomness for this argument, which makes sampling $\D$ efficient. Our argument deviates from this approach in two important aspects. 

The first difference is that in the pseudoentropy case
we can improve the advantage from $\Omega\left(2^{-\frac{n}{2}}\right)$, where $n$ is the logarithm of the support of the variables considered, to $\Omega\left(2^{-\frac{k}{2}}\right)$, where $k$ is the min-entropy of the variable we want to distinguish from. The reason is that being statistically far from any $k$-bit min-entropy distributions implies a \emph{large bias on already $2^k$ elements}.
This fact (see \Cref{lemma:smooth_charact,cor:no_smoothentropy_implies_bias_to_entropy}, and also \Cref{fig:attack}) is a new characterization of smooth min-entropy of independent interest.

The second subtlety arises when it comes to amplify the advantage over the partition slices. For the pseudorandomness case
it is enough to split the domain in a deterministic way, for example by fixing prefixes of $n$-bit strings, 
in our case this is not sufficient. 
For us a ``good'' partition must shatter the $2^k$-element high-biased set, which can be arbitrary. 
Our solution is to use \emph{random partitions}, in fact, we show that using $4$-universal hashing is sufficient. 
Generating base distinguishers and partitions at the same time makes probability calculations more involved.

Technical calculations are based on the fourth moment method, similarly as in \cite{DeTT10}. 
The basic idea is that for settings where the second and fourth moment are easy to compute (e.g. sums of independent symmetric random variables)
we can obtain good {upper and lower} bounds on the first moment. In the context of algorithmic applications these techniques are usually credited to~\cite{berger1991fourth}.
Interestingly, exploiting natural relations to \emph{random walks}, we show that calculations immediately follow by adopting classical (almost one century old) tools and results~\cite{marcinkiewicz1937quelques,khintchine1924satz}. 
Our technical novelty is an application of moment inequalities
due to Marcinkiewicz-Zygmund and Paley-Zygmund, which allow us to prove slightly more than just the existence of an attack. Namely
we generate it with {constant success probability}.




\paragraph{Advantage $\Omega(2^{-k/2})$}
Consider any $X$ with $\delta$-smooth min-entropy smaller than $k$. 
This requirement can be seen as a statement about the ``shape'' of the distribution. 
Namely, the mass of $X$ that is above the threshold $2^{-k}$ equals at least $\delta$, that is
\begin{align*}
 \sum_{x} \max(P_X(x) - 2^{-k}, 0) \geqslant \delta.
\end{align*}
For an illustration see \Cref{fig:mass}.
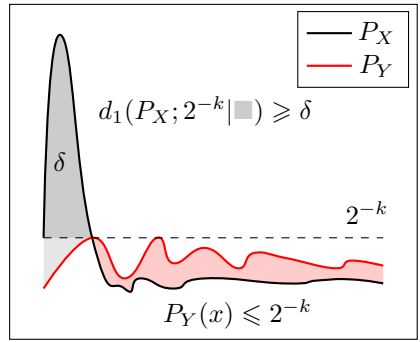
\begin{figure}[h!]
\centering
\begin{tikzpicture}
\begin{axis}[
 xtick=\empty, ytick=\empty, ymin=-0.2, legend entries = {$P_X$,$P_Y$}]
  \addplot[smooth, tension=1, thick, name path = X, color = black] coordinates { 
  (0,0) (0.1,0.4) (0.3,0) (0.5,-0.1) (0.6,-0.08) (0.8,-0.1) (1,-0.08) (1.5,-0.09) (1.7,-0.084) (1.9,-0.082) (2.1,-0.09)};
  \addplot[smooth, tension=0.8, thick, name path = Y, color = red] coordinates { (0,-0.1) (0.3,0) (0.5,-0.08) (0.7,0) (0.8,-0.05) (1,-0.02) (1.2,-0.06) (1.3,-0.03) (1.5,-0.04) (1.8,-0.06) (1.9,-0.045) (2.1,-0.055)};
  \addplot[dashed, name path = entropy] coordinates {(0,0) (2.1,0)};
  \addplot fill between[ 
    of = X and Y, 
    split, 
    every even segment/.style = {gray!20},
    every odd segment/.style  = {red!20}
  ];
  \addplot fill between[ 
    of = entropy and X, 
    split, 
    every odd segment/.style = {gray!40},
    every even segment/.style  = {yellow!40, opacity = 0}
  ];
  \node at (axis cs:1.0, 0.25) {$d_1(P_X;2^{-k}|\textcolor{gray!40}{\blacksquare}) \geqslant \delta $};
  \node at (axis cs:1.2,-0.15) {$P_Y(x) \leqslant 2^{-k}$}; 
  \node at (axis cs:0.1,0.15) {$\delta$};
  \node at (axis cs:2.0,0.05) {$2^{-k}$};
\end{axis}
\end{tikzpicture}
\caption{An intuition behind the attack. Random $\pm 1$-weights make the bias equal to the $\ell_2$-distance
of $P_X$ and $P_Y$. This distance can be bounded in terms of the $\ell_1$ distance, which concentrates mass difference $\delta$ on less than $2^k$ elements (the region in gray). }
\label{fig:mass}
\end{figure}

\noindent We construct our attack based on this observation. Define the advantage of a function $\D$ for distributions $X$ and $Y$ as
$$\mathsf{Adv}^{\D}(X;Y) = \left| \sum_{x}\D(x)(P_X(x)-P_Y(x)) \right|$$
(writing also $\mathsf{Adv}^{\D}_S $ when the summation is restricted to a subset $S$). Consider a 
random distinguisher $\D:\bin^n\rightarrow \{-1,1\}$.
Random variables $\D(x)$ for different $x$ are independent, have zero-mean and second moment equal to 1. Therefore 
the expected square of of the advantage, over the choice of $\D$, equals
\begin{align*}
 \E\left[\left(\mathsf{Adv}^{\D}(X;Y)\right)^2\right] =
 \E \left| \sum_{x}\D(x)(P_X(x)-P_Y(x)) \right|^2=
  \sum_{x}(P_X(x)-P_Y(x))^2 
\end{align*}
Let $S$ be the set of $x$ such that $P_X(x) > 2^{-k}$. For any $Y$ of min-entropy at least $k$ we obtain
\begin{align*}
 \sum_{x\in S}(P_X(x)-P_Y(x))^2  \geqslant
  \sum_{x\in S}(P_X(x)-2^{-k})^2 \geqslant |S|^{-1}\left( \sum_{x\in S}\left(P_X(x)-2^{-k}\right)\right)^2 
  \geqslant  2^{-k}\delta^2
\end{align*}
where the first inequality follows because $P_Y(x)\leqslant 2^{-k} < P_X(x)$ for $x\in S$,
 the second inequality is by the standard inequality between the first and second norm, and the third inequality follows because we showed that 
$\Pr[X\in S] \geqslant |S|\cdot 2^{-k}+\delta$ (illustrated in \Cref{fig:mass}) which also implies $|S|^{-1}\geqslant 2^{-k}$.
 
By the previous formula on the expected squared advantage this means that  
\begin{align*}
 \E\left[\left(\mathsf{Adv}^{\D}(X;Y)\right)^2\right] \geqslant  2^{-k}\delta^2
\end{align*}
for at least one choice of $\D$. This implies
\begin{align*}
\mathsf{Adv}_{}^{\D} (X;Y)\geqslant    2^{-\frac{k}{2}}\delta.
\end{align*}

A random $\D$ as defined would be of size exponential in $n$, but 
since we used only the second moment in calculations,
it suffices to generate $\D(x)$ as pairwise independent random variables. By assuming $4$-wise independence -- which can be computed by $O(n^2)$ size circuits -- we can prove slightly more, namely that a constant fraction of generated $\D$'s are good distinguishers.
This property will be important for the next step, where we amplify the advantage assuming larger distinguishers.

\paragraph{Leveraging the advantage by slicing the domain}

Consider a random and equitable partition $\{S_i\}_{i=1}^{T}$ of the set  $\{0,1\}^n$. 
From the previous analysis we know that a random distinguisher achieves advantage $\epsilon = d_2(P_X;P_Y)$ over the whole domain. 
Note that (for any, not necessarily random partition $\{S_i\}_i$) we have
$$\left(d_2(P_X;P_Y)\right)^2 = \sum_{i=1}^{T} \left(d_2(P_X;P_Y|S_i)\right)^2$$ 
where $ d_2(P_X;P_Y|S_i)$ is the restriction of the distance to the set $S_i$ (by restricting the summation to $S_i$). 
From a random partition we expect the mass difference between $P_X$ and $P_Y$ to be \emph{distributed
evenly} among the partition slices (see \Cref{fig:good_partition}). Based on the last equation, we expect
\begin{align*}
 d_2(P_X;P_Y|S_i) & \approx \frac{ d_2(P_X;P_Y)}{\sqrt{T}} 
\end{align*}
to hold with high probability over $\{S_i\}_i$. 
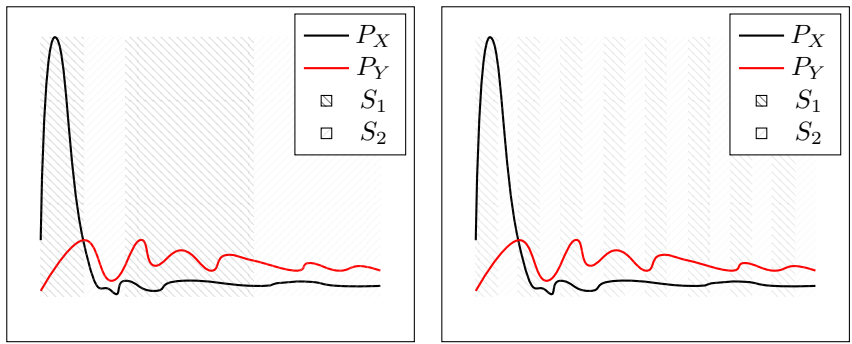
\begin{figure}[b]
\subcaptionbox{An example of a ``bad'' partition. \\ Almost all advantage is captured by one partition slice $S_1$.\label{fig:bad_partiton}}{
\begin{tikzpicture}
\begin{axis}[
 xtick=\empty, ytick=\empty, ymin=-0.2, legend entries = {$P_X$,$P_Y$,$S_1$,$S_2$}]
  \addlegendimage{thick, color = black}
  \addlegendimage{thick, color = red}
  \addlegendimage{pattern color = gray!100,only marks, mark=square*, pattern=north west lines, opacity=1}
  \addlegendimage{pattern color = gray!20,only marks, mark=square*, pattern=north east lines, opacity=1}

  \addplot[smooth, tension=1, thick, name path = X, color = black] coordinates { 
  (0,0) (0.1,0.4) (0.3,0) (0.5,-0.1) (0.6,-0.08) (0.8,-0.1) (1,-0.08) (1.5,-0.09) (1.7,-0.084) (1.9,-0.082) (2.1,-0.09) (2.4,-0.09)};
  \addplot[smooth, tension=0.8, thick, name path = Y, color = red] coordinates { (0,-0.1) (0.3,0) (0.5,-0.08) (0.7,0) (0.8,-0.05) (1,-0.02) (1.2,-0.06) (1.3,-0.03) (1.5,-0.04) (1.8,-0.06) (1.9,-0.045) (2.1,-0.06) (2.25,-0.051) (2.4,-0.06) } ;
  \draw[draw = none, pattern=north west lines, pattern color=gray!100, opacity = 0.5] (axis cs:0,-0.11) rectangle (axis cs:0.3,0.4);
  \draw[draw = none, pattern=north east lines, pattern color=gray!20, opacity = 0.5] (axis cs:0.3,-0.11) rectangle (axis cs:0.6,0.4);
  \draw[draw = none, pattern=north west lines, pattern color=gray!100, opacity = 0.5] (axis cs:0.6,-0.11) rectangle (axis cs:1.5,0.4);
  \draw[draw = none, pattern=north east lines, pattern color=gray!20, opacity = 0.5] (axis cs:1.5,-0.11) rectangle (axis cs:2.4,0.4);
\end{axis}
\end{tikzpicture}
}
\subcaptionbox{An example of a ``good`` partition. The advantage is evenly distributed among slices $S_1,S_2$.\label{fig:good_partition} }{
\begin{tikzpicture}
\begin{axis}[
 xtick=\empty, ytick=\empty, ymin=-0.2, legend entries = {$P_X$,$P_Y$,$S_1$,$S_2$}]
  \addlegendimage{thick, color = black}
  \addlegendimage{thick, color = red}
  \addlegendimage{pattern color = gray!100,only marks, mark=square*, pattern=north west lines, opacity=1}
  \addlegendimage{pattern color = gray!20,only marks, mark=square*, pattern=north east lines, opacity=1}

  \addplot[smooth, tension=1, thick, name path = X, color = black] coordinates { 
  (0,0) (0.1,0.4) (0.3,0) (0.5,-0.1) (0.6,-0.08) (0.8,-0.1) (1,-0.08) (1.5,-0.09) (1.7,-0.084) (1.9,-0.082) (2.1,-0.09) (2.4,-0.09)};
  \addplot[smooth, tension=0.8, thick, name path = Y, color = red] coordinates { (0,-0.1) (0.3,0) (0.5,-0.08) (0.7,0) (0.8,-0.05) (1,-0.02) (1.2,-0.06) (1.3,-0.03) (1.5,-0.04) (1.8,-0.06) (1.9,-0.045) (2.1,-0.06) (2.25,-0.051) (2.4,-0.06) } ;
  \foreach \i in {0.0,0.3,...,2.4}{
  \edef\temp{
  \noexpand\draw [pattern=north west lines, pattern color = gray!100, draw = none, opacity = 0.4] (axis cs:\i,-0.11) rectangle (axis cs:\i+0.15,0.4);
  \noexpand\draw [pattern=north east lines, pattern color = gray!20, draw = none, opacity = 0.4] (axis cs:\i+0.15,-0.11) rectangle (axis cs:\i+0.3,0.4);
  }
  \temp
}
\end{axis}
\end{tikzpicture}
}
\caption{Illustration of good and bad partitions.}
\label{fig:attack}
\end{figure}
In fact, if the mass difference is not well balanced amongst the slices (in the extreme case, concentrated on one slice) our argument will not offer any gain over the previous construction (see \Cref{fig:bad_partiton}).

By applying the previous argument to individual slices, for every $i$ we can obtain an advantage $\mathsf{Adv}^{\D}_{S_i}(X;Y) = \Omega\left( (T^{-\frac{1}{2}}2^{-\frac{k}{2}})\delta\right)$ when restricted to the set $S_i$ (with high probability over the choice of $\D$ and $\{S_{i}\}_i$). Now if the sets $S_i$ are \emph{efficiently recognizable},
we can combine them into a better distinguisher. Namely for every $i$ we chose a value $\beta_i\in\{-1,1\}$ such that 
$\D$'s advantage (before taking the absolute value) restricted to $S_i$ has sign $\beta_i$,  and set 
\begin{align*}
\hat{\D}(x) =\beta_i \D(x), \text{ where $i$ is such that } x\in S_i,
\end{align*}
then the advantage equals (with high probability over  $\D$ and the $S_i$'s)
\begin{align*}
 \mathsf{Adv}^{\hat{\D}} (X;Y)= \sum_{i=1}^{T}\mathsf{Adv}^{\D}_{S_i}(X;Y) = \Omega\left( T^{\frac{1}{2}}2^{-\frac{k}{2}} \delta \right)
\end{align*}
We need to specify a $4$-wise independent hash for $\D$, another $4$-wise independent hash for deciding in which of the $T$ slices an element lies, and $T$ bits to encode the $\beta_i$'s. Thus for a given $T$ the size of $\hat{\D}$ will be $T+\tilde O(n)$. Using the above equation, we then get a smooth tradeoff $s =O(2^{k}\epsilon^{2}\delta^{-2})$ between the advantage $\epsilon$ and the circuit size $s$.
This discussion shows that to complete the argument we need the following two properties of the partition (a) the mass difference between $P_X$ and $P_Y$ is (roughly) equidistributed among slices
and (b) the membership in partition slices can be efficiently decided.
\paragraph{Slicing using $4$-wise independence}
To complete the argument, we assume that $T$ is a power of $2$, and generate the slicing by using a $4$-universal hash function $h:\bin^n\rightarrow \bin^{\log{T}}$. The $i$-th slice $S_i$ is defined as $\{x\in\bin^n: h(x)=i\}$. 
These assumptions are enough to prove that
\begin{align*}
 \E\mathsf{Adv}^{\hat\D}_{S_i}(X;Y) =\Omega\left( T^{-\frac{1}{2}}d_2(P_X;P_Y) \right) =   \Omega\left( T^{-\frac{1}{2}}2^{-\frac{k}{2}}\delta \right).
\end{align*}
Interestingly, the expected advantage (left-hand side) cannot be computed directly. The trick here is to bound it in terms of the second and fourth moment.
The above inequality, coupled with bounds on second moments of the advantage $\mathsf{Adv}^{\hat\D}_{S_i}$ (obtained directly), allows us to prove that
\begin{align*}
\Pr\left[  \sum_{i=1}^{T}\mathsf{Adv}^{\hat\D}_{S_i} \geqslant \Omega(1)\cdot  T^{\frac{1}{2}}2^{-\frac{k}{2}} \delta \right] > \Omega(1).
\end{align*}
This shows that there exists the claimed distinguisher $\hat{\D}$. In fact,
a \emph{constant fraction} of generated (over the choice of $\D$ and $\{\S_i\}_i$) distinguishers $\hat{\D}$'s works.  


\paragraph{Random walks}
From a technical point of view, our method involves computing higher moments of the advantages to obtain concentration and anti-concentration results. 
The key observation is that the advantage written down as 
\begin{align*}
\mathsf{Adv}^{\D}_{S_i}(X;Y) = \left|\sum_{x}(P_X(x)-P_Y(x))\mathbf{1}_{S_i}(x)\D(x)\right|
\end{align*}
which can be then studied as a \emph{random walk}
\begin{align*}
\mathsf{Adv}^{\D}_{S_i}(X;Y) = \left|\sum_{x}\xi_{i,x}\right|
\end{align*}
with zero-mean increments $\xi_{i,x} = (P_X(x)-P_Y(x))\mathbf{1}_{S_i}(x)\D(x)$.
The difference with respect to classical model is that the increments are only
$\ell$-wise independent (for $\ell=4$). However, that classical moment bounds still apply (see \Cref{sec:random_walks,sec:anticon_bounds} for more details).




\section{Preliminaries}

\subsection{Interpolation Inequalities}\label{sec:interpol}

Interpolation inequalities show how to bound the $p$-th moment of a random variable if we know bounds on one smaller and one higher moment.
The following result is known also as \emph{log-convexity of $L_p$ norms}, and can be proved by the H\"{o}lder Inequality

\begin{lemma}[Moments interpolation]\label{lemma:moments_interpolation}
For any $p_1 < p < p_2$ and any bounded random variable $Z$ we have
\begin{align*}
  \|Z\|_p \leqslant \left(\|Z\|_{p_1}\right)^{\theta} \left(\|Z\|_{p_2}\right)^{1-\theta}
\end{align*}
where $\theta$ is such that $\frac{\theta}{p_1} + \frac{1-\theta}{p_2} = \frac{1}{p}$, and
for any $r$ we define $\|Z\|_r = \left(\E|Z|^r\right)^{\frac{1}{r}}$.
\end{lemma}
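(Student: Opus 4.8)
The plan is to derive the inequality directly from Hölder's inequality, exactly as the statement anticipates. First I would check that the exponent $\theta$ is well defined and lies in $(0,1)$: since $p_1<p<p_2$ (with all three positive) we have $1/p_2<1/p<1/p_1$, so expressing $1/p$ as the convex combination $\theta/p_1+(1-\theta)/p_2$ forces $\theta=\frac{1/p-1/p_2}{1/p_1-1/p_2}\in(0,1)$. I would also dispose of the trivial case where $Z=0$ almost surely (both sides vanish); otherwise, since $Z$ is bounded, the moments $\E|Z|^{p_1}$, $\E|Z|^{p}$, $\E|Z|^{p_2}$ are all finite and strictly positive, so the expressions below make sense.

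Next I would split the $p$-th moment multiplicatively as $|Z|^{p}=|Z|^{\theta p}\cdot|Z|^{(1-\theta)p}$ and apply Hölder's inequality with the pair of exponents $a=\frac{p_1}{\theta p}$ and $b=\frac{p_2}{(1-\theta)p}$. These form a genuine conjugate pair, since $\frac1a+\frac1b=\frac{\theta p}{p_1}+\frac{(1-\theta)p}{p_2}=p\left(\frac{\theta}{p_1}+\frac{1-\theta}{p_2}\right)=p\cdot\frac1p=1$, and both exceed $1$ because the identity $\frac1p=\frac{\theta}{p_1}+\frac{1-\theta}{p_2}$ gives $\frac1p>\frac{\theta}{p_1}$ (hence $\theta p<p_1$) and $\frac1p>\frac{1-\theta}{p_2}$ (hence $(1-\theta)p<p_2$). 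Hölder then yields
\[
\E|Z|^{p}\;\leqslant\;\left(\E|Z|^{\theta p\,a}\right)^{1/a}\left(\E|Z|^{(1-\theta)p\,b}\right)^{1/b}\;=\;\left(\E|Z|^{p_1}\right)^{\theta p/p_1}\left(\E|Z|^{p_2}\right)^{(1-\theta)p/p_2},
\]
where I used $\theta p\,a=p_1$ and $(1-\theta)p\,b=p_2$. Raising to the power $1/p$ and recalling $\|Z\|_r=(\E|Z|^r)^{1/r}$ gives $\|Z\|_p\leqslant\left(\|Z\|_{p_1}\right)^{\theta}\left(\|Z\|_{p_2}\right)^{1-\theta}$, which is the claim.

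There is essentially no hard part: the only points requiring care are verifying that $a,b$ are a valid conjugate pair and that $\theta\in(0,1)$ (so the exponents in the multiplicative split are positive), and both follow immediately from $p_1<p<p_2$. An equivalent route would be to observe that $r\mapsto\log\E|Z|^{r}$ is convex (again a consequence of Hölder, or of Cauchy–Schwarz applied to pairs $|Z|^{r/2}$) and then read the stated bound off as the convexity estimate evaluated at the point $1/p$ between $1/p_1$ and $1/p_2$; I would nonetheless keep the direct Hölder computation above, since it is the shortest and most self-contained.
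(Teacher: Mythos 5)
Your proof is correct and follows exactly the route the paper indicates (the lemma is stated without proof, with the remark that it ``can be proved by the H\"older Inequality''). The decomposition $|Z|^p=|Z|^{\theta p}|Z|^{(1-\theta)p}$ with conjugate exponents $a=p_1/(\theta p)$, $b=p_2/((1-\theta)p)$ is the standard argument, and your verification that these are a valid conjugate pair with $a,b>1$ is careful and complete.
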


Alternatively, we can \emph{lower bound} a moment given \emph{two higher moments}. 
This is very useful when higher moments are easier to compute. In this work will bound
first moments from below when we know the second and the fourth moment (which are easier to compute as they are even-order moments)
\begin{corollary}\label{cor:interpolation}
For any bounded $Z$ we have $\E|Z| \geqslant \frac{\left(\E|Z|^2\right)^{\frac{3}{2}}}{\left(\E|Z|^4\right)^{\frac{1}{2}}} $.
\end{corollary}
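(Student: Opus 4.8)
The plan is to read Corollary~\ref{cor:interpolation} off Lemma~\ref{lemma:moments_interpolation} as the special case $(p_1,p,p_2)=(1,2,4)$. First I would solve the defining equation for the interpolation parameter, $\frac{\theta}{1}+\frac{1-\theta}{4}=\frac12$, which gives $\theta=\frac13$. The lemma then states $\|Z\|_2\leqslant \|Z\|_1^{1/3}\,\|Z\|_4^{2/3}$.

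Next I would simply unfold the norms via $\|Z\|_2=(\E|Z|^2)^{1/2}$, $\|Z\|_1=\E|Z|$, $\|Z\|_4=(\E|Z|^4)^{1/4}$, so that the inequality becomes $(\E|Z|^2)^{1/2}\leqslant (\E|Z|)^{1/3}(\E|Z|^4)^{1/6}$. Raising both sides to the third power yields $(\E|Z|^2)^{3/2}\leqslant (\E|Z|)\,(\E|Z|^4)^{1/2}$, and dividing through by $(\E|Z|^4)^{1/2}$ gives exactly $\E|Z|\geqslant \frac{(\E|Z|^2)^{3/2}}{(\E|Z|^4)^{1/2}}$.

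Alternatively, to keep the argument self-contained, I would derive the same estimate directly from H\"older's inequality: write $|Z|^2=|Z|^{2/3}\cdot|Z|^{4/3}$ and apply H\"older with conjugate exponents $\frac32$ and $3$, obtaining $\E|Z|^2\leqslant (\E|Z|)^{2/3}(\E|Z|^4)^{1/3}$; raising this to the power $\frac32$ reproduces the claim. This also makes transparent why the bound is tight for two-valued $|Z|$.

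The only point needing a remark is the degenerate case $\E|Z|^4=0$, i.e.\ $Z=0$ almost surely, in which the right-hand side is read as $0$ and the inequality holds trivially; in all other cases the division is legitimate because $Z$ is bounded, so every moment is finite and $\E|Z|^4>0$. I do not expect any genuine obstacle: the corollary is a one-line consequence of log-convexity of $L_p$ norms (equivalently of H\"older), and the only ``work'' is bookkeeping the exponents $p_1=1,\ p=2,\ p_2=4$ and $\theta=\frac13$.
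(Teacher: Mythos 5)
Your proof is correct and follows exactly the route the paper intends: instantiating Lemma~\ref{lemma:moments_interpolation} with $(p_1,p,p_2)=(1,2,4)$, solving for $\theta=\tfrac13$, and rearranging (equivalently, applying H\"older directly to $|Z|^2=|Z|^{2/3}\cdot|Z|^{4/3}$). The exponent bookkeeping and the handling of the degenerate case $Z=0$ a.s.\ are all in order.
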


\subsection{Moments of random walks}\label{sec:random_walks}

For a random walk $\sum_{x}\xi(x)$, where $\xi(x)$ are independent with zero-mean, we have good control over the moments, namely
$\E\left| \sum_{x}\xi(x)\right|^p = \Theta(1)\cdot \left(\sum_{x}\mathrm{Var}(\xi(x))\right)^{\frac{p}{2}}$ where constants depend on $p$.
This result is due to Marcinkiewicz and Zygmund \cite{marcinkiewicz1937quelques} who extended the former result of Khintchine 
\cite{khintchine1924satz}.
Below we notice that for \emph{small moments} $p$ it suffices to assume only $p$-wise independence (most often used versions 
assume fully independence)
\begin{lemma}[Strengthening of Marcinkiewicz-Zygmund's Inequality for $p=4$]\label{lemma:Khintchine}
Suppose that $\{\xi(x)\}_{x\in\mathcal{X}}$ are $4$-wise independent, with zero mean. Then we have
\begin{align*}
\frac{1}{\sqrt{3}} \left(\sum_{x\in\mathcal{X}}\mathrm{Var}(\xi(x))\right)^{\frac{1}{2}} \leqslant  & \E \left|\sum_{x\in\mathcal{X}} \xi(x)\right| \leqslant   \left(\sum_{x\in\mathcal{X}}\mathrm{Var}(\xi(x))\right)^{\frac{1}{2}}  \\
 & \E \left|\sum_{x\in\mathcal{X}} \xi(x)\right|^2   =   \sum_{x\in\mathcal{X}}\mathrm{Var}(\xi(x)) \\
\left(\sum_{x\in\mathcal{X}}\mathrm{Var}(\xi(x))\right)^2 \leqslant & \E \left|\sum_{x\in\mathcal{X}} \xi(x)\right|^4 \leqslant  3 \left(\sum_{x\in\mathcal{X}}\mathrm{Var}(\xi(x))\right)^2 
\end{align*}
\end{lemma}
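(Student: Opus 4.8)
The plan is to prove the three displayed groups of (in)equalities in the order: the second‑moment identity, then the two‑sided fourth‑moment bound, and finally the two‑sided first‑moment bound, which will simply be read off from the other two via \Cref{cor:interpolation}. Throughout write $S=\sum_{x\in\mathcal X}\xi(x)$ and $v_x=\mathrm{Var}(\xi(x))=\E[\xi(x)^2]$ (using zero mean). First I would expand $\E|S|^2=\sum_{x,y}\E[\xi(x)\xi(y)]$; pairwise independence (which is implied by $4$‑wise independence) together with $\E[\xi(x)]=0$ annihilates every off‑diagonal term, leaving $\E|S|^2=\sum_x v_x$, the middle equation.

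Next, the crux: the fourth moment. Expand $\E|S|^4=\sum_{x,y,z,w\in\mathcal X}\E[\xi(x)\xi(y)\xi(z)\xi(w)]$ and classify the index quadruples by the partition of the four slots into blocks carrying equal indices. By $4$‑wise independence each summand factors into a product of moments of \emph{distinct} $\xi$'s, and any block of size $1$ forces a factor $\E[\xi(\cdot)]=0$; hence only two patterns survive — ``all four indices equal'', contributing $\sum_x\E[\xi(x)^4]$, and ``two distinct indices, each occurring twice''. For the latter a fixed unordered pair $\{x,y\}$ with $x\neq y$ can be laid into the four slots in $\binom{4}{2}=6$ ways, i.e.\ $3$ ways per ordered pair, each contributing $v_xv_y$. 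Therefore
\[
\E|S|^4 \;=\; \sum_{x}\E[\xi(x)^4] \;+\; 3\!\!\sum_{x\neq y} v_x v_y .
\]
The lower bound $\E|S|^4\geqslant(\sum_x v_x)^2$ follows because $\E[\xi(x)^4]\geqslant v_x^2$ by Jensen and, all terms being nonnegative, $\sum_x v_x^2+3\sum_{x\neq y}v_xv_y\geqslant \sum_x v_x^2+\sum_{x\neq y}v_xv_y=(\sum_x v_x)^2$. For the upper bound I use that in every application in this paper each $\xi(x)$ is a scaled fair coin, i.e.\ takes the two values $\pm c_x$ with equal probability, so $\E[\xi(x)^4]=c_x^4=v_x^2$; plugging this in gives $\E|S|^4= 3(\sum_x v_x)^2-2\sum_x v_x^2\leqslant 3(\sum_x v_x)^2$. (More generally the upper bound holds verbatim as soon as $\E[\xi(x)^4]\leqslant 3\,v_x^2$ for each $x$.)

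Finally, the first‑moment bounds. The upper bound is immediate: $\E|S|\leqslant(\E|S|^2)^{1/2}=(\sum_x v_x)^{1/2}$ by Jensen (equivalently Cauchy–Schwarz). For the lower bound I apply \Cref{cor:interpolation} to $Z=S$, which gives $\E|S|\geqslant(\E|S|^2)^{3/2}(\E|S|^4)^{-1/2}$; substituting the second‑moment identity and the fourth‑moment upper bound yields $\E|S|\geqslant(\sum_x v_x)^{3/2}\big(3(\sum_x v_x)^2\big)^{-1/2}=\tfrac{1}{\sqrt3}(\sum_x v_x)^{1/2}$. The only real work is the fourth‑moment expansion — correctly seeing that under mere $4$‑wise independence exactly the two index patterns above contribute, and pinning down the combinatorial constant $3$ — together with the (mild, and in our setting automatic) control $\E[\xi(x)^4]\leqslant 3\,v_x^2$ on the individual summands, which is what the fourth‑moment upper bound, and hence the first‑moment lower bound, rests on.
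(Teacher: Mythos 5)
Your proof follows the same route as the paper's: expand the second and fourth moments of $S=\sum_x\xi(x)$, use $4$-wise independence plus zero mean to kill every index tuple except ``all four equal'' and ``two disjoint pairs'' (the latter with multiplicity $3$ per ordered pair), and then read off the first-moment bounds from \Cref{cor:interpolation} and Jensen. The combinatorics, the second-moment identity, and the step $\E|S|\geqslant(\E S^2)^{3/2}(\E S^4)^{-1/2}$ are exactly what the paper does.

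Where you do something different is in being explicit about an extra hypothesis the paper slips in silently. The paper's chain
\begin{align*}
\sum_x\E\xi(x)^4+3\!\!\sum_{x\neq y}\!\E\xi(x)^2\E\xi(y)^2 \;=\; 3\Bigl(\sum_x\E\xi(x)^2\Bigr)^2-2\sum_x\E\xi(x)^4
\end{align*}
is an identity only when $\E\xi(x)^4=(\E\xi(x)^2)^2$ for every $x$, i.e.\ each $\xi(x)$ is (a.s.)\ two-valued $\pm c_x$. You correctly isolate what is actually needed — $\E\xi(x)^4\leqslant 3\,\mathrm{Var}(\xi(x))^2$ — and note that without it the fourth-moment upper bound, and hence the first-moment lower bound, do not hold for general $4$-wise independent, zero-mean increments (a single $\xi$ taking $\pm a$ with probability $\varepsilon/2$ each and $0$ otherwise has $\E\xi^4/(\E\xi^2)^2=1/\varepsilon$, unbounded). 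Your treatment of the fourth-moment \emph{lower} bound via $\E\xi(x)^4\geqslant \mathrm{Var}(\xi(x))^2$ is also cleaner, since it needs no extra hypothesis at all.

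However, your parenthetical that the hypothesis $\E\xi(x)^4\leqslant 3\,\mathrm{Var}(\xi(x))^2$ is ``in our setting automatic'' (``in every application in this paper each $\xi(x)$ is a scaled fair coin'') is not right. It does hold for \Cref{lemma:random_attack_advantage}, where $\xi(x)=\D(x)\Delta(x)$. But in \Cref{cor:slice_advantages} the lemma is applied to $\xi(x)=\Delta(x)\D(x)\mathbf{1}_{S_i}(x)$, a \emph{three}-valued increment with $\Pr[\xi(x)\neq 0]=1/T$, for which $\E\xi(x)^4=T\cdot\mathrm{Var}(\xi(x))^2$; the condition fails whenever $T>3$. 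This is the same gap the paper itself has — it is visibly the source of the in-source note asking where the factor $T$ enters in \Cref{cor:slice_advantages} — so you are not doing worse than the paper, but the claim that the hypothesis is automatic should be dropped or qualified.
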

The proof appears in \Cref{proof:lemma:Khintchine}.

\subsection{Anticontentration bounds}\label{sec:anticon_bounds}

\begin{lemma}[Paley-Zygmund Inequality]\label{lemma:PaleyZygmund}
For any positive random variable $Z$ and a parameter $\theta\in(0,1)$ we have
\begin{align*}
 \Pr\left[Z > \theta\E Z \right] \geqslant (1-\theta)^2\frac{(\E Z)^2}{\E Z^2}.
\end{align*}
\end{lemma}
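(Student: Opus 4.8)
The plan is to prove this via the standard two-term split of $\E Z$ combined with the Cauchy--Schwarz inequality; there is essentially no real obstacle here, the argument is only a few lines and uses none of the higher-moment machinery from the preceding subsections.

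First I would decompose, using that $Z$ is positive so that all terms below are well defined,
\[
\E Z = \E\!\left[Z\cdot\mathbf{1}_{\{Z\le\theta\E Z\}}\right] + \E\!\left[Z\cdot\mathbf{1}_{\{Z>\theta\E Z\}}\right].
\]
On the event $\{Z\le\theta\E Z\}$ the integrand of the first term is at most $\theta\E Z$, and multiplying by the indicator only makes it smaller, so the first term is bounded by $\theta\,\E Z$. For the second term I would apply Cauchy--Schwarz to the two factors $Z$ and $\mathbf{1}_{\{Z>\theta\E Z\}}$, using $\mathbf{1}^2=\mathbf{1}$, to get
\[
\E\!\left[Z\cdot\mathbf{1}_{\{Z>\theta\E Z\}}\right]\le \sqrt{\E Z^2}\cdot\sqrt{\Pr\!\left[Z>\theta\E Z\right]}.
\]

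Putting the two bounds together yields $\E Z\le \theta\,\E Z + \sqrt{\E Z^2}\cdot\sqrt{\Pr[Z>\theta\E Z]}$, hence $(1-\theta)\,\E Z\le \sqrt{\E Z^2}\cdot\sqrt{\Pr[Z>\theta\E Z]}$. Both sides being nonnegative, I may square them, and rearranging gives exactly $\Pr[Z>\theta\E Z]\ge (1-\theta)^2(\E Z)^2/\E Z^2$. The only degenerate case is $\E Z=0$, which forces $Z=0$ almost surely; then both sides vanish (reading the right-hand side as $0$) and the claim holds trivially. The only points to watch are the direction of the indicator estimate in the first summand and that the final squaring is legitimate because neither side is negative.
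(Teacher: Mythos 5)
Your proof is correct and is the standard textbook derivation of the Paley--Zygmund inequality: split $\E Z$ over the event $\{Z>\theta\E Z\}$ and its complement, bound the complement piece trivially by $\theta\E Z$, and apply Cauchy--Schwarz to the other piece, then square and rearrange. The paper does not actually include a proof of this lemma (it is invoked as a known classical result and is absent from the ``Omitted Proofs'' section), so there is no in-paper argument to compare against; your write-up is complete, the monotonicity step on the first summand is stated the right way round, the squaring is justified by nonnegativity of both sides, and the edge case $\E Z=0$ is handled appropriately.
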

By applying \Cref{lemma:PaleyZygmund} to the setting of \Cref{lemma:Khintchine}, and choosing $\theta = \frac{1}{\sqrt{3}}$ we obtain
\begin{corollary}[Anticoncentration for walks with $4$-wise independent increments]\label{cor:anticonc_4independent_walks}
Suppose that $\{\xi(x)\}_{x\in\mathcal{X}}$ are $4$-wise independent with zero-mean, then we have
\begin{align*}
\Pr\left[ \left| \sum_{}\xi(x) \right| >  \frac{1}{3}\left(\sum_{}\mathrm{Var}(\xi(x))\right)^{\frac{1}{2}} \right] > \frac{1}{17}.
\end{align*}
where the summation is over $x\in\mathcal{X}$.
\end{corollary}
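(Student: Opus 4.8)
\emph{Proof plan.} Abbreviate $S \eqdef \sum_{x\in\mathcal{X}}\xi(x)$ and $\sigma^2 \eqdef \sum_{x\in\mathcal{X}}\mathrm{Var}(\xi(x))$, so that the claimed event is exactly $\{|S| > \sigma/3\}$. The plan is to feed the moment estimates of \Cref{lemma:Khintchine} into the Paley--Zygmund inequality (\Cref{lemma:PaleyZygmund}), applied to the random variable $Z \eqdef |S|$. From \Cref{lemma:Khintchine} we have the first-moment lower bound $\E|S| \geq \sigma/\sqrt{3}$ together with the exact second moment $\E S^2 = \sigma^2$, and these are the only inputs needed.

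Concretely, I would invoke \Cref{lemma:PaleyZygmund} with $\theta = 1/\sqrt{3}$, which gives
\begin{align*}
\Pr\!\left[\,|S| > \tfrac{1}{\sqrt3}\,\E|S|\,\right]\ \geq\ \left(1-\tfrac{1}{\sqrt3}\right)^{2}\frac{(\E|S|)^{2}}{\E S^{2}} .
\end{align*}
The ratio on the right is at least $1/3$, since $\E S^2 = \sigma^2$ while $(\E|S|)^2 \geq \sigma^2/3$. On the left, the threshold obeys $\tfrac{1}{\sqrt3}\E|S| \geq \tfrac{1}{\sqrt3}\cdot\tfrac{\sigma}{\sqrt3} = \tfrac{\sigma}{3}$, so the event $\{|S| > \sigma/3\}$ in the statement \emph{contains} $\{|S| > \tfrac{1}{\sqrt3}\E|S|\}$ and hence is at least as likely. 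Combining the two observations,
\begin{align*}
\Pr\!\left[\,|S| > \tfrac{1}{3}\Bigl(\sum_{x\in\mathcal X}\mathrm{Var}(\xi(x))\Bigr)^{1/2}\,\right]\ \geq\ \frac{1}{3}\left(1-\tfrac{1}{\sqrt3}\right)^{2}\ =\ \frac{4-2\sqrt3}{9}.
\end{align*}

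Finally I would verify the arithmetic fact $\tfrac{4-2\sqrt3}{9} > \tfrac{1}{17}$: clearing denominators this is $17(4-2\sqrt3) > 9$, i.e.\ $59 > 34\sqrt3$, i.e.\ (squaring, both sides positive) $3481 = 59^2 > 34^2\cdot 3 = 3468$, which holds. I do not expect any genuine obstacle: the proof is a one-shot application of Paley--Zygmund. The only place to be careful is that the constant $\tfrac{4-2\sqrt3}{9}\approx 0.0595$ clears the target $\tfrac1{17}\approx 0.0588$ only by a hair, so the argument must not be wasteful — in particular one must pass from the Paley--Zygmund threshold $\theta\,\E|S|$ down to $\sigma/3$ using the \emph{same} bound $\E|S|\geq\sigma/\sqrt3$ rather than a cruder estimate, and one should keep the exact identity $\E S^2=\sigma^2$ rather than a two-sided bound.
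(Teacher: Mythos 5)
Your proof is correct and follows exactly the route the paper indicates: apply the Paley--Zygmund inequality (\Cref{lemma:PaleyZygmund}) with $\theta=1/\sqrt{3}$ to $Z=|S|$, plugging in the first- and second-moment estimates from \Cref{lemma:Khintchine}. The paper states this in one line without working out the constant; you carry through the computation correctly, including the monotonicity step (replacing the threshold $\theta\,\E|S|\geq\sigma/3$) and the numerical check $\tfrac{4-2\sqrt3}{9}>\tfrac{1}{17}$.
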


\section{Proof of \Cref{thm:main}}

\begin{lemma}[Characterizing smooth min-entropy]\label{lemma:smooth_charact}
For any random variable $X$ with values in a finite set $\mathcal{X}$, any $\delta$ and $k$ we have the following equivalence
$$ H_{\infty}^{\delta}(X) \geqslant k \iff \sum_{x\in\mathcal{X}}\max\left(P_X(x)-2^{-k},0\right) \leqslant \delta.$$ 
\end{lemma}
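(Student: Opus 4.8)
The plan is to prove the two implications separately, using the elementary fact that $\Hmin(Y)\geqslant k$ holds if and only if $P_Y(x)\leqslant 2^{-k}$ for every $x$. Throughout, write $\delta'\eqdef\sum_{x}\max(P_X(x)-2^{-k},0)$ and let $S\eqdef\{x:P_X(x)>2^{-k}\}$ be the set of ``heavy'' points, so that $\delta'=\sum_{x\in S}(P_X(x)-2^{-k})=P_X(S)-|S|2^{-k}$; in particular $|S|2^{-k}\leqslant 1$, a fact that will be reused in the sequel.

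For the forward direction, suppose $Y$ witnesses $\Hmins{\delta}(X)\geqslant k$, i.e. $\Hmin(Y)\geqslant k$ and $X$ is $\delta$-close to $Y$. Since $P_Y(x)\leqslant 2^{-k}<P_X(x)$ for every $x\in S$, already the event $S$ separates $X$ and $Y$ by
\[
  \delta\;\geqslant\;P_X(S)-P_Y(S)\;=\;\sum_{x\in S}\bigl(P_X(x)-P_Y(x)\bigr)\;\geqslant\;\sum_{x\in S}\bigl(P_X(x)-2^{-k}\bigr)\;=\;\delta',
\]
which is exactly the inequality $\delta'\leqslant\delta$ we want; note this estimate is insensitive to whether one measures closeness as $\max_T|P_X(T)-P_Y(T)|$ or as $\sum_x|P_X(x)-P_Y(x)|$.

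For the converse, assume $\delta'\leqslant\delta$ and build an explicit $Y$ by ``water-filling'': first cap $X$ at the threshold by setting $\widetilde P(x)=\min(P_X(x),2^{-k})$, which discards exactly $\delta'$ units of mass; then redistribute this mass $\delta'$ onto points with $\widetilde P(x)<2^{-k}$, never exceeding $2^{-k}$, to obtain a genuine distribution $P_Y$. The only point needing a check is feasibility of the redistribution: the total available slack is $\sum_x(2^{-k}-\widetilde P(x))=2^{-k}|\mathcal X|-(1-\delta')$, which is at least $\delta'$ precisely when $|\mathcal X|\geqslant 2^{k}$ — automatic in our setting $\mathcal X=\bin^{n}$ with $k\leqslant n$. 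By construction $P_Y(x)\leqslant 2^{-k}$ for all $x$, so $\Hmin(Y)\geqslant k$; and $Y$ is obtained from $X$ by moving a total mass $\delta'$ down (on $S$) and the same mass $\delta'$ up (off $S$), so the statistical distance of $X$ and $Y$ is $\delta'\leqslant\delta$, giving $\Hmins{\delta}(X)\geqslant k$.

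I do not expect a genuine obstacle: the content is just that the statistically closest min-entropy-$k$ distribution to $X$ is the one capping the profile of $P_X$ at level $2^{-k}$, and the two displayed inequalities are the ``$\geqslant$'' and ``$\leqslant$'' sides of that optimality. The only mildly delicate point is the bookkeeping in the converse — confirming that the redistribution is actually possible and does not inflate the distance beyond $\delta'$ — which is handled by the slack computation $|\mathcal X|\geqslant 2^{k}$ together with the ``mass moved $=\delta'$'' accounting. One could alternatively package the argument as the linear program ``minimize the distance between $X$ and $Y$ subject to $P_Y\leqslant 2^{-k}$ pointwise and $\sum_x P_Y(x)=1$'' and read off its optimum value $\delta'$, but the direct construction above is shorter and more transparent.
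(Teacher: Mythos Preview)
Your proof is correct and follows essentially the same approach as the paper's: the forward direction bounds $\delta'$ by comparing $P_X$ and the witness $P_Y$ on the heavy set $S$, and the converse constructs $Y$ by capping $P_X$ at $2^{-k}$ and redistributing the excess $\delta'$ into the available slack. Your write-up is in fact slightly more careful than the paper's --- you make the feasibility condition $|\mathcal{X}|\geqslant 2^{k}$ explicit (the paper tacitly assumes it), and your forward-direction chain of inequalities is cleaner than the paper's two-step restriction argument.
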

The proof appears in \Cref{proof:lemma:smooth_charact}. We will work with the following equivalent statement
\begin{corollary}[No smooth min-entropy $k$ implies bias w.r.t. distributions of min-entropy $k$ over at most $2^k$ elements]
\label{cor:no_smoothentropy_implies_bias_to_entropy}
We have $\Hmins{\delta}(X) < k$ if and only if there exists a set $S$ of at most $2^k$ elements such that
$$ \sum_{x\in S}\left|P_X(x)-P_Y(x)\right| > \delta $$
for all $Y$ of min-entropy at least $k$.
\end{corollary}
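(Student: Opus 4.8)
The plan is to reduce everything to the characterization in \Cref{lemma:smooth_charact}, which states that $\Hmins{\delta}(X) \geqslant k$ is equivalent to $\sum_{x}\max(P_X(x)-2^{-k},0)\leqslant\delta$, and hence that $\Hmins{\delta}(X) < k$ is equivalent to $\sum_{x}\max(P_X(x)-2^{-k},0) > \delta$. So it suffices to prove that this last inequality is equivalent to the existence of a set $S$ with $|S|\leqslant 2^k$ and $\sum_{x\in S}|P_X(x)-P_Y(x)| > \delta$ for every $Y$ of min-entropy at least $k$.

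For the direction ``$\sum_x\max(P_X(x)-2^{-k},0) > \delta$ implies such an $S$ exists'', the natural candidate is the set of heavy points $S \eqdef \{x : P_X(x) > 2^{-k}\}$. Since every element of $S$ carries probability strictly above $2^{-k}$ while the total is at most $1$, we get $|S|\cdot 2^{-k} < \sum_{x\in S}P_X(x)\leqslant 1$, so $|S| < 2^k$. Now for any $Y$ with $\Hmin(Y)\geqslant k$ we have $P_Y(x)\leqslant 2^{-k} < P_X(x)$ on $S$, hence termwise $|P_X(x)-P_Y(x)| \geqslant P_X(x)-2^{-k}$; summing over $S$ yields $\sum_{x\in S}|P_X(x)-P_Y(x)| \geqslant \sum_{x\in S}(P_X(x)-2^{-k}) = \sum_{x}\max(P_X(x)-2^{-k},0) > \delta$, which is exactly what we need.

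For the converse I argue by contraposition: assuming $\sum_{x}\max(P_X(x)-2^{-k},0)\leqslant\delta$, \Cref{lemma:smooth_charact} gives $\Hmins{\delta}(X)\geqslant k$, so by the definition of smooth min-entropy as a maximum over the statistical-distance-$\delta$ ball (attained, since that ball is compact in the finite-dimensional simplex and $\Hmin$ is continuous on it) there is a \emph{single} distribution $Y^*$ with $\Hmin(Y^*)\geqslant k$ and $d_1(X;Y^*)\leqslant\delta$. This one $Y^*$ defeats every candidate set at once: for any $S$, $\sum_{x\in S}|P_X(x)-P_{Y^*}(x)| \leqslant \sum_{x}|P_X(x)-P_{Y^*}(x)| = d_1(X;Y^*)\leqslant\delta$, so no set can have bias exceeding $\delta$ against all min-entropy-$k$ distributions. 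Combining the two directions gives the claimed equivalence.

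I do not expect a genuine obstacle: all the substance is already in \Cref{lemma:smooth_charact}, and the remaining bookkeeping is routine. The only points that need a moment of care are (i) deriving $|S|\leqslant 2^k$ from the mass bound on the heavy points, and (ii) observing that the maximum defining $\Hmins{\delta}$ is actually attained, so that in the converse a single witness $Y^*$ can be used uniformly over all candidate sets $S$.
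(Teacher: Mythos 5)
Your proof is correct and takes essentially the same route as the paper: the $\Longrightarrow$ direction uses the heavy set $S=\{x:P_X(x)>2^{-k}\}$ together with \Cref{lemma:smooth_charact} and the termwise bound $P_Y(x)\leqslant 2^{-k}<P_X(x)$ on $S$, exactly as in the paper. For $\Longleftarrow$ the paper simply invokes the definition of smooth min-entropy in one line, whereas you spell out the contrapositive, route it (unnecessarily but harmlessly) back through \Cref{lemma:smooth_charact}, and note that the maximum in the definition of $\Hmins{\delta}$ is attained by compactness — a correct clarification but not a substantively different argument.
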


\begin{proof}[Proof of \Cref{cor:no_smoothentropy_implies_bias_to_entropy}]
The direction $\Longleftarrow$ trivially follows by the definition of smooth min-entropy. 
Now assume $\Hmins{\delta}(X) < k$. Let $S$ be the set of all $x$ such that $P_X(x) > 2^{-k}$, then 
$|S|<2^k$, and moreover by \Cref{lemma:smooth_charact} we have $\sum_{x\in S}\left(P_X(x)-2^{-k}\right)>\delta$. In particular for any $Y$ of min-entropy $k$ (i.e., $P_Y(x)\leqslant 2^{-k}$ for all $x$) 
$$ \sum_{x\in S}\left(P_X(x)-P_Y(x)\right) > \delta  $$
\end{proof}

\begin{lemma}[Bias implies Euclidean distance]\label{lemma:bias_to_euclidean}
For any distributions $P_X,P_Y$ on $\mathcal{X}$ and any subset $S$ of $\mathcal{X}$ we have
$$ \left(\sum_{x\in S}\left( P_X(x)-P_Y(x) \right)^2\right)^{\frac{1}{2}} > |S|^{-1/2}\sum_{x\in S}\left| P_X(x)-P_Y(x) \right|. $$
\end{lemma}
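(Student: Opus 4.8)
The plan is to recognize this as the elementary comparison between the $\ell_1$ and $\ell_2$ norms on the finite index set $S$, i.e.\ an instance of the Cauchy--Schwarz inequality (equivalently, Jensen's inequality for the convex map $t\mapsto t^2$ against the uniform measure on $S$). Write $a_x \eqdef P_X(x)-P_Y(x)$ for $x\in S$. First I would apply Cauchy--Schwarz to the two vectors $(|a_x|)_{x\in S}$ and $(1)_{x\in S}$, obtaining
\[
\sum_{x\in S}|a_x| \;=\; \sum_{x\in S} |a_x|\cdot 1 \;\leqslant\; \Big(\sum_{x\in S}a_x^2\Big)^{1/2}\Big(\sum_{x\in S} 1\Big)^{1/2} \;=\; |S|^{1/2}\Big(\sum_{x\in S}a_x^2\Big)^{1/2}.
\]
Dividing by $|S|^{1/2}$ and rearranging yields $\big(\sum_{x\in S}(P_X(x)-P_Y(x))^2\big)^{1/2} \geqslant |S|^{-1/2}\sum_{x\in S}|P_X(x)-P_Y(x)|$, which is the claimed bound with $\geqslant$ in place of $>$.

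For the \emph{strict} inequality I would invoke the equality case of Cauchy--Schwarz: equality holds above precisely when $(|a_x|)_{x\in S}$ is proportional to $(1)_{x\in S}$, i.e.\ when all the absolute differences $|P_X(x)-P_Y(x)|$, $x\in S$, are equal (this also covers the degenerate situation where $P_X$ and $P_Y$ agree on $S$, in which case both sides vanish). Hence outside this equality case the inequality is automatically strict as stated. In every use of this lemma the set $S$ comes with the extra information from \Cref{cor:no_smoothentropy_implies_bias_to_entropy} that $\sum_{x\in S}|P_X(x)-P_Y(x)|>\delta>0$ and $|S|<2^{k}$, so the strictness actually needed downstream already follows from the non-strict bound combined with $|S|^{-1/2}>2^{-k/2}$ and the strict bias bound; I would therefore either simply record the lemma with $\geqslant$, or add the harmless hypothesis that the differences on $S$ are not all equal.

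I do not expect a genuine obstacle here: the only point requiring any care is the equality/strictness discussion above. Everything else is a one-line application of Cauchy--Schwarz (or, equivalently, of the power-mean inequality $\frac{1}{|S|}\sum_{x\in S}a_x^2 \geqslant \big(\frac{1}{|S|}\sum_{x\in S}|a_x|\big)^2$) followed by taking square roots.
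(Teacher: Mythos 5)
Your proof is correct and takes essentially the same route as the paper, which applies Jensen's inequality for $t\mapsto t^2$ against the uniform measure on $S$ --- i.e.\ exactly the power-mean/Cauchy--Schwarz comparison you give. You are also right to flag the strictness issue: Cauchy--Schwarz (and Jensen) yield only $\geqslant$ in general, so the lemma as stated (and the paper's one-line proof, which writes $>$ in the Jensen step) is slightly imprecise, though harmless since every downstream use supplies a strict lower bound on $\sum_{x\in S}|P_X(x)-P_Y(x)|$ and a strict upper bound on $|S|$.
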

\begin{proof}
By the Jensen Inequality we have 
\begin{align*}
|S|^{-1}\left(\sum_{x\in S}\left( P_X(x)-P_Y(x) \right)^2\right) > \left(|S|^{-1}\sum_{x\in S}\left| P_X(x)-P_Y(x)\right|\right)^2 
\end{align*}
which is equivalent to the statement.
\end{proof}

\begin{corollary}[No smooth min-entropy implies Euclidean distance to min-entropy distributions]\label{cor:no_smooth_implies_euclidean_distance}
Suppose that $\Hmins{\delta}(X) < k$. Then for any $Y$ of min-entropy at least $k$ we have
$\left(\sum_{x}|P_X(x)-P_Y(x)|^2\right)^{\frac{1}{2}} > 2^{-\frac{k}{2}}\delta$.
\end{corollary}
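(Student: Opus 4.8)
\textbf{Proof plan for \Cref{cor:no_smooth_implies_euclidean_distance}.}
The statement is an immediate combination of the two preceding results, so the plan is to chain them together and then pass from a partial sum to the full sum. First I would invoke \Cref{cor:no_smoothentropy_implies_bias_to_entropy}: since $\Hmins{\delta}(X) < k$, there is a set $S \subseteq \mathcal{X}$ with $|S| \le 2^k$ such that $\sum_{x\in S}|P_X(x)-P_Y(x)| > \delta$ for every $Y$ of min-entropy at least $k$. Fix such a $Y$.

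Next I would apply \Cref{lemma:bias_to_euclidean} to this particular $S$, which gives
\begin{align*}
\left(\sum_{x\in S}\left(P_X(x)-P_Y(x)\right)^2\right)^{1/2} > |S|^{-1/2}\sum_{x\in S}\left|P_X(x)-P_Y(x)\right| > |S|^{-1/2}\,\delta .
\end{align*}
Since $|S|\le 2^k$ we have $|S|^{-1/2}\ge 2^{-k/2}$, so the right-hand side is at least $2^{-k/2}\delta$. Finally, because all terms $(P_X(x)-P_Y(x))^2$ are nonnegative, $\sum_{x}(P_X(x)-P_Y(x))^2 \ge \sum_{x\in S}(P_X(x)-P_Y(x))^2$, and taking square roots yields $\left(\sum_{x}|P_X(x)-P_Y(x)|^2\right)^{1/2} > 2^{-k/2}\delta$, as claimed.

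There is essentially no obstacle here: the content has already been extracted into \Cref{cor:no_smoothentropy_implies_bias_to_entropy} (the smooth-min-entropy characterization) and \Cref{lemma:bias_to_euclidean} (the $\ell_1$-to-$\ell_2$ conversion via Jensen). The only point to be slightly careful about is the direction of the inequality $|S|^{-1/2}\ge 2^{-k/2}$ from $|S|\le 2^k$, and that the strict inequality is preserved throughout — both the bias bound and the Jensen step in \Cref{lemma:bias_to_euclidean} are stated strictly, so the final bound is strict as well.
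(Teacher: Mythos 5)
Your proof is correct and matches the paper's own argument, which simply says the corollary follows by combining \Cref{cor:no_smoothentropy_implies_bias_to_entropy} with \Cref{lemma:bias_to_euclidean}; you have just spelled out the two-line chaining (apply the bias bound on the distinguished set $S$, convert $\ell_1$ to $\ell_2$ on $S$, use $|S|\le 2^k$, and drop to the full sum) that the paper leaves implicit.
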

\begin{proof}[Proof of \Cref{cor:no_smooth_implies_euclidean_distance}]
It suffices to combine \Cref{lemma:bias_to_euclidean} and \Cref{cor:no_smoothentropy_implies_bias_to_entropy}.

\end{proof}
By \Cref{cor:anticonc_4independent_walks} we conclude that the advantage of a random distinguisher for any two measures (in our case $P_X$ and $P_Y$) equals the Euclidean distance.
\begin{lemma}[The advantage of a random distinguisher equals the Euclidean distance]\label{lemma:random_attack_advantage}
Let $\{\D(x)\}_{x\in\{0,1\}^n}$ be $4$-wise independent as indexed by $x$ and such that $\D(x)$ outputs a random element from $\{-1,1\}$. 
Then for any 
set $S$ we have
\begin{align*}
 \left| \sum_{x\in S}\D(x)(P_X(x)-P_Y(x)) \right|  > \frac{1}{3}\cdot d_2(P_X;P_Y)
\end{align*}
with probability $\frac{1}{17}$ over the choice of $\D$ (the result actually holds for any measures in place of $P_X,P_Y$).
\end{lemma}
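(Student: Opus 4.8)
The plan is to recognize Lemma~\ref{lemma:random_attack_advantage} as an immediate instance of the anticoncentration statement already established in \Cref{cor:anticonc_4independent_walks}, applied to a carefully chosen family of increments. First I would fix the set $S$ and define, for each $x\in S$, the random variable $\xi(x) = \D(x)\cdot(P_X(x)-P_Y(x))$. Since the $\D(x)$ are $4$-wise independent and each is uniform on $\{-1,1\}$, the $\xi(x)$ are $4$-wise independent as well (a deterministic rescaling of each coordinate preserves $\ell$-wise independence), and they have zero mean because $\E\D(x)=0$. Their variances are $\mathrm{Var}(\xi(x)) = (P_X(x)-P_Y(x))^2\cdot\E[\D(x)^2] = (P_X(x)-P_Y(x))^2$.

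Next I would simply invoke \Cref{cor:anticonc_4independent_walks} with $\mathcal{X}$ replaced by $S$ and these $\xi(x)$'s. It yields
\begin{align*}
\Pr\left[\left|\sum_{x\in S}\xi(x)\right| > \frac{1}{3}\left(\sum_{x\in S}(P_X(x)-P_Y(x))^2\right)^{1/2}\right] > \frac{1}{17},
\end{align*}
which, recalling $d_2(P_X;P_Y) = \left(\sum_x (P_X(x)-P_Y(x))^2\right)^{1/2}$ and noting that restricting the sum to $S$ can only decrease it, gives the claimed bound whenever $S$ is the full domain; for general $S$ one states the conclusion with $d_2$ restricted to $S$, or observes that the lemma as phrased is the $S=\bin^n$ case. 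I would also remark that nothing in this argument uses that $P_X,P_Y$ are probability distributions — only that $P_X(x)-P_Y(x)$ are fixed reals — which justifies the parenthetical ``the result actually holds for any measures.''

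There is essentially no obstacle here: the real content lives in \Cref{lemma:Khintchine} (the $4$-wise-independent Marcinkiewicz--Zygmund bound) and the Paley--Zygmund inequality, both of which are quoted earlier. The only point requiring a sentence of care is the closure of $4$-wise independence under coordinatewise deterministic scaling, and the identification of the variance term; everything else is bookkeeping. If I wanted the proof to be completely self-contained I could alternatively redo the second and fourth moment computations by hand using $4$-wise independence — $\E|\sum\xi(x)|^2 = \sum(P_X(x)-P_Y(x))^2$ and $\E|\sum\xi(x)|^4 \le 3(\sum(P_X(x)-P_Y(x))^2)^2$ — and then apply Paley--Zygmund directly with $\theta=1/3$; but leaning on \Cref{cor:anticonc_4independent_walks} is cleaner and is clearly the intended route given how the preliminaries were set up.
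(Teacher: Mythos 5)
Your proof is correct and matches the paper's intended argument: the paper has no standalone proof of this lemma, only the one-line appeal to \Cref{cor:anticonc_4independent_walks} immediately preceding it, which is exactly the route you take, together with the (correct, and worth spelling out) observations that $4$-wise independence is closed under deterministic coordinatewise rescaling and that $\mathrm{Var}(\D(x)(P_X(x)-P_Y(x))) = (P_X(x)-P_Y(x))^2$. Your caveat about the ``for any set $S$'' phrasing is also well-placed: as written, the right-hand side is the full-domain $d_2(P_X;P_Y)$, so the inequality only holds verbatim when $S=\bin^n$ (or with $d_2$ implicitly restricted to $S$), which is indeed the only way the lemma is used downstream.
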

For our case, that is the setting in \Cref{lemma:bias_to_euclidean}, we obtain
\begin{corollary}[A random attack achieves $\Omega\left(2^{-k}\delta\right)$ with significant probability]\label{cor:random_attack_advantage}
For $X,Y$ as in \Cref{cor:no_smooth_implies_euclidean_distance}, and $\D$ as in \Cref{lemma:random_attack_advantage} we have 
 $\mathsf{Adv}^{\D}(X;Y) \ge \frac{1}{3}\cdot 2^{-\frac{k}{2}}\delta$ w.p.  $\frac{1}{17}$ over $\D$.
\end{corollary}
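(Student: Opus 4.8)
The statement is a direct combination of the two preceding results, so the argument will be short. The plan is, first, to invoke \Cref{cor:no_smooth_implies_euclidean_distance}: under the hypothesis $\Hmins{\delta}(X) < k$, every $Y$ of min-entropy at least $k$ satisfies $d_2(P_X;P_Y) > 2^{-\frac{k}{2}}\delta$, and in particular this holds for the fixed $Y$ considered here. This is a purely deterministic fact about the pair $(P_X,P_Y)$ and carries no probability.

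Next I would apply \Cref{lemma:random_attack_advantage} with $S=\{0,1\}^n$ (the whole domain): for a $4$-wise independent family $\{\D(x)\}_x$ of uniform $\{-1,1\}$ values,
\begin{align*}
\mathsf{Adv}^{\D}(X;Y) = \left|\sum_{x}\D(x)(P_X(x)-P_Y(x))\right| > \tfrac{1}{3}\, d_2(P_X;P_Y)
\end{align*}
with probability at least $\tfrac{1}{17}$ over the choice of $\D$. Chaining the two bounds, on this same event of probability $\tfrac{1}{17}$ we obtain $\mathsf{Adv}^{\D}(X;Y) > \tfrac{1}{3}\, d_2(P_X;P_Y) > \tfrac{1}{3}\cdot 2^{-\frac{k}{2}}\delta$, which gives the claimed inequality (the strict bound trivially implies the asserted $\ge$).

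There is essentially no obstacle. The only point worth stating explicitly is that the randomness used in \Cref{lemma:random_attack_advantage} comes solely from $\D$, while the Euclidean-distance lower bound is deterministic, so the two are combined without any union bound or additional loss in the success probability. If one wanted the bound to hold simultaneously for \emph{all} admissible $Y$ (rather than a fixed one), this would be deferred to the later amplification steps; for \Cref{cor:random_attack_advantage} as stated, fixing $Y$ suffices.
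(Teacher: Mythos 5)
Your proof is correct and follows exactly the same route the paper intends: the corollary is just the composition of \Cref{cor:no_smooth_implies_euclidean_distance} (deterministic lower bound $d_2(P_X;P_Y) > 2^{-k/2}\delta$) with \Cref{lemma:random_attack_advantage} applied to $S=\{0,1\}^n$ (random-$\D$ advantage exceeds $\tfrac{1}{3}d_2(P_X;P_Y)$ with probability $\ge \tfrac{1}{17}$). The paper states it without a separate proof precisely because the chaining is immediate, and your explicit note that one source of the bound is deterministic and the other probabilistic (so no union bound is needed) is a fair, accurate clarification.
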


\subsection{Partitioning the domain into $T$ slices}

Let $h:\{0,1\}^n\rightarrow [1\ldots 2^t]$, where $t=\lceil \log T\rceil$,
be a $4$-universal hash function. Define $S_{i} = \{x: h(x) = i\}$,
$\Delta(x) = P_X(x)-P_Y(x)$ 
and consider advantages on slices $S_i$
\begin{align*}
 \mathsf{Adv}^{\D}_{S_i}\left(X;Y\right) & = \left| \sum_{x}\Delta(x)\D(x)\mathbf{1}_{S_i}(x) \right|
\end{align*}

The following corollary shows that on each of our $T$ slices, we get the advantage $T^{-\frac{1}{2}}2^{-\frac{k}{2}}\delta$.
The proof appears in \Cref{proof:cor:slice_advantages}.
\begin{corollary}[(Mixed) moments of slice advantages]\label{cor:slice_advantages}
For $\D$, $\{S_u\}_u$ as above and every $i,j$
\begin{align*}
 \E\limits_{\D,\{S_u\}_u}\mathsf{Adv}^{\D}_{S_i}(X;Y) & \geqslant 3^{-\frac{1}{2}} T^{-\frac{1}{2}} \cdot d_2\left(P_X;P_Y\right)  \\
 \E\limits_{\D,\{S_u\}_u}\left( \mathsf{Adv}^{\D}_{S_i}\left(X;Y\right)\mathsf{Adv}^{\D}_{S_j}(X;Y)\right)
 & \leqslant T^{-1} \cdot d_2(P_X;P_Y)^2
\end{align*}
(the statement is valid for arbitrary measures in place of $P_X,P_Y$).
\end{corollary}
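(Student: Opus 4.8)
The plan is to treat the randomness of the hash $h$ (equivalently the partition $\{S_u\}_u$) and the randomness of the distinguisher $\D$ \emph{jointly}, and to feed the resulting sum directly into the moment bounds of \Cref{lemma:Khintchine}, exactly as in \Cref{lemma:random_attack_advantage} but now with the index set of the walk carrying the extra randomness of $h$. Fix a slice index $i$, write $\Delta(x)=P_X(x)-P_Y(x)$ as in the excerpt, and set $\xi(x)\eqdef\Delta(x)\D(x)\mathbf 1_{S_i}(x)$, so that $\mathsf{Adv}^{\D}_{S_i}(X;Y)=\bigl|\sum_x\xi(x)\bigr|$. Since $\D$ and $h$ are drawn from independent sources of randomness, and since both $\{\D(x)\}_x$ and $\{\mathbf 1_{S_i}(x)\}_x$ are $4$-wise independent families, the pairs $\{(\D(x),h(x))\}_x$ are $4$-wise independent, hence so are the $\{\xi(x)\}_x$. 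Each $\xi(x)$ has zero mean (because $\E\D(x)=0$) and, assuming $T=2^t$ so that $\Pr[h(x)=i]=1/T$, variance $\mathrm{Var}(\xi(x))=\Delta(x)^2\cdot\E[\D(x)^2]\cdot\Pr[h(x)=i]=\Delta(x)^2/T$.

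With this setup the first inequality is immediate from the lower bound in \Cref{lemma:Khintchine}:
\[
\E_{\D,\{S_u\}_u}\mathsf{Adv}^{\D}_{S_i}(X;Y)=\E\Bigl|\sum_x\xi(x)\Bigr|\geqslant\frac{1}{\sqrt3}\Bigl(\sum_x\mathrm{Var}(\xi(x))\Bigr)^{1/2}=\frac{1}{\sqrt3}\,T^{-1/2}\Bigl(\sum_x\Delta(x)^2\Bigr)^{1/2}=3^{-1/2}T^{-1/2}d_2(P_X;P_Y).
\]
For the mixed second moment, the middle identity of \Cref{lemma:Khintchine} (which needs only pairwise independence) gives $\E[(\mathsf{Adv}^{\D}_{S_i}(X;Y))^2]=\sum_x\mathrm{Var}(\xi(x))=T^{-1}d_2(P_X;P_Y)^2$ for each fixed $i$, and the Cauchy--Schwarz inequality then yields $\E[\mathsf{Adv}^{\D}_{S_i}\,\mathsf{Adv}^{\D}_{S_j}]\leqslant\bigl(\E[(\mathsf{Adv}^{\D}_{S_i})^2]\bigr)^{1/2}\bigl(\E[(\mathsf{Adv}^{\D}_{S_j})^2]\bigr)^{1/2}=T^{-1}d_2(P_X;P_Y)^2$, covering both $i=j$ and $i\neq j$ (and, for $i=j$, even with equality).

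I expect the only genuine subtlety — and the reason this is stated as a separate corollary rather than folded into \Cref{lemma:random_attack_advantage} — to be the justification that the $\xi(x)$ remain $4$-wise independent once the partition itself is random: one must check that drawing $\D$ and $h$ from independent $4$-wise independent families makes the tuples $(\D(x),h(x))$ jointly $4$-wise independent, which follows by factoring probabilities across the $\D$-family and the $h$-family and then across coordinates. Once this is in place, the tempting alternative of first conditioning on $\{S_u\}_u$ and then averaging is \emph{avoided}: that route would require estimating $\E_{\{S_u\}_u}\bigl(\sum_x\Delta(x)^2\mathbf 1_{S_i}(x)\bigr)^{1/2}$ from below, where Jensen points the wrong way. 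Both bounds instead drop straight out of \Cref{lemma:Khintchine}. If $T$ is not a power of $2$ one replaces $1/T$ by $2^{-t}\in(1/(2T),\,1/T]$ throughout, affecting only the constants.
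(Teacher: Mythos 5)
Your proof takes essentially the same route as the paper: apply \Cref{lemma:Khintchine} directly to the jointly random family $\xi(x)=\Delta(x)\D(x)\mathbf 1_{S_i}(x)$, and then Cauchy--Schwarz for the mixed term. In fact your write-up is \emph{more} complete than the paper's own: the paper states $\E\mathsf{Adv}^{\D}_{S_i}\geqslant 3^{-1/2}\bigl(\sum_x\Delta(x)^2\bigr)^{1/2}$ and then asserts the claimed bound with the $T^{-1/2}$ factor, with a leftover editorial note flagging exactly that gap (``how does $T$ come in here?''). You close it correctly by computing $\mathrm{Var}(\xi(x))=\Delta(x)^2\,\E[\D(x)^2]\,\Pr[h(x)=i]=\Delta(x)^2/T$, which is what produces the $T^{-1/2}$. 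Your justification of the $4$-wise independence of $\{\xi(x)\}_x$ (by independence of the $\D$- and $h$-families and factoring across coordinates) is also the right observation and is implicit but unstated in the paper. Your side remark about why one should not condition on $\{S_u\}_u$ first — concavity of $t\mapsto t^{1/2}$ makes Jensen point the wrong way for a lower bound on $\E_h\bigl(\sum_x\Delta(x)^2\mathbf 1_{S_i}(x)\bigr)^{1/2}$ — correctly identifies why the joint treatment is the cleaner path.
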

Denote $Z = \sum_{i} \mathsf{Adv}^{\D}_{S_i}\left(X;Y\right)$. Using \Cref{lemma:PaleyZygmund} with $\theta = \frac{1}{\sqrt{3}}$ where 
we compute $\E Z^2$ and $\E Z$ according to \Cref{cor:slice_advantages} we obtain
$\Pr\left[ |Z| > \frac{1}{\sqrt{3}}\cdot \E|Z|\right] \geqslant \frac{1}{17} $. Bounding
once again $\E|Z|$ as in  \Cref{cor:slice_advantages} we get
\begin{corollary}[Total advantage on all parition slices]\label{cor:parition_advantage}
For $X,Y$ as in \Cref{cor:no_smooth_implies_euclidean_distance}, $\D$ and $S_i$ defined above we have
\begin{align*}
 \Pr_{\D,\{S_u\}_u}\left[ \sum_{i=1}^{T} \mathsf{Adv}^{\D}_{S_i}(X;Y) \geqslant \frac{1}{3}\cdot T^{\frac{1}{2}}2^{-\frac{k}{2}}\delta \right] \geqslant 
\frac{1}{17}.
\end{align*}
(for general $X,Y$ the lower bound is $\Omega(1)\cdot T^{\frac{1}{2}}\cdot d_2(P_X;P_Y)$).
\end{corollary}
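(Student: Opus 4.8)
The plan is to apply the Paley--Zygmund inequality (Lemma~\ref{lemma:PaleyZygmund}) to the nonnegative random variable $Z = \sum_{i=1}^{T}\mathsf{Adv}^{\D}_{S_i}(X;Y)$, where the randomness is over both the $4$-universal hash $h$ (equivalently the slicing $\{S_u\}_u$) and the $4$-wise independent $\D$. To use Paley--Zygmund I need a lower bound on $\E Z$ and an upper bound on $\E Z^2$; both are supplied by Corollary~\ref{cor:slice_advantages}.

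First I would write, by linearity of expectation and the first inequality of Corollary~\ref{cor:slice_advantages},
\begin{align*}
\E Z \;=\; \sum_{i=1}^{T}\E\,\mathsf{Adv}^{\D}_{S_i}(X;Y)\;\geqslant\; T\cdot 3^{-\frac12}T^{-\frac12}\,d_2(P_X;P_Y)\;=\;3^{-\frac12}\,T^{\frac12}\,d_2(P_X;P_Y).
\end{align*}
Next, expanding the square and using the mixed-moment bound (second inequality of Corollary~\ref{cor:slice_advantages}) termwise,
\begin{align*}
\E Z^2 \;=\; \sum_{i,j}\E\!\left(\mathsf{Adv}^{\D}_{S_i}(X;Y)\,\mathsf{Adv}^{\D}_{S_j}(X;Y)\right)\;\leqslant\; T^2\cdot T^{-1}\,d_2(P_X;P_Y)^2\;=\;T\,d_2(P_X;P_Y)^2 .
\end{align*}
Hence $(\E Z)^2/\E Z^2 \geqslant \bigl(3^{-1}T\,d_2^2\bigr)/\bigl(T\,d_2^2\bigr) = 1/3$. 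Applying Lemma~\ref{lemma:PaleyZygmund} with $\theta=1/\sqrt3$ gives
\begin{align*}
\Pr\!\left[Z > \tfrac{1}{\sqrt3}\,\E Z\right]\;\geqslant\;(1-\tfrac{1}{\sqrt3})^2\cdot\tfrac13\;\geqslant\;\tfrac{1}{17},
\end{align*}
the last numerical bound being a routine check ($(1-1/\sqrt3)^2/3 \approx 0.0566 > 1/17 \approx 0.0588$ — I would verify the constant, possibly tightening $\theta$ or the stated $1/17$ if needed). Combining with the lower bound on $\E Z$, on this event $Z > \tfrac{1}{\sqrt3}\cdot 3^{-\frac12}T^{\frac12}d_2(P_X;P_Y) = \tfrac13 T^{\frac12}d_2(P_X;P_Y)$. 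Finally, for the $X,Y$ of Corollary~\ref{cor:no_smooth_implies_euclidean_distance} we have $d_2(P_X;P_Y) > 2^{-k/2}\delta$, which yields $Z \geqslant \tfrac13 T^{\frac12}2^{-k/2}\delta$ on an event of probability at least $1/17$, as claimed; the general statement follows by not substituting the $d_2$ bound.

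I do not expect a genuine obstacle here — this corollary is a short deduction once Corollary~\ref{cor:slice_advantages} is in hand. The one place to be careful is the bookkeeping of the numerical constants: Paley--Zygmund with $\theta=1/\sqrt3$ barely clears $1/17$, so I would double-check that inequality (and the direction of the strict/non-strict signs inherited from Lemma~\ref{lemma:bias_to_euclidean} and Corollary~\ref{cor:no_smooth_implies_euclidean_distance}) rather than the structure of the argument. The real work has already been pushed into Corollary~\ref{cor:slice_advantages}, whose proof (deferred to the appendix) is where the $4$-universality of $h$ and the $4$-wise independence of $\D$ interact through the random-walk fourth-moment machinery of Lemma~\ref{lemma:Khintchine}.
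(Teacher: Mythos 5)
Your proof is correct and follows exactly the paper's route: apply Paley--Zygmund with $\theta=1/\sqrt{3}$ to $Z=\sum_i\mathsf{Adv}^{\D}_{S_i}$, using \Cref{cor:slice_advantages} for $\E Z\geqslant 3^{-1/2}T^{1/2}d_2$ and $\E Z^2\leqslant T\,d_2^2$, then substitute $d_2>2^{-k/2}\delta$ from \Cref{cor:no_smooth_implies_euclidean_distance}. The one thing you flagged is only a typo in your own arithmetic: $(1-1/\sqrt{3})^2/3\approx 0.0595$, not $0.0566$, so it does clear $1/17\approx 0.0588$ and the constant in the corollary is fine as stated.
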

The corollary shows that the \emph{total absolute advantage} over all partition slices, is as expected. Since $\{S_i\}_i$ is a partition we have
\begin{align*}
 \sum_{i=1}^{T} \mathsf{Adv}^{\D}_{S_i}(X;Y) & = \sum_{i=1}^{T}\left|\sum_{x\in S_i}\left(P_X(x)-P_Y(x)\right)\D(x) \right|  = \sum_{x}\left(P_X(x)-P_Y(x)\right)\D(x)\beta(x)
\end{align*}
where for $\beta_i\overset{def}{=}\mathrm{sgn}\left( \sum_{x\in S_i}\left(P_X(x')-P_Y(x)\right)\D(x) \right)$ (the sign of
the advantage on the $i$-th slice) we define
$\beta(x) = \beta_i$ where $S_i$ contains $x$. This shows that by ''flipping`` the distinguisher output on the slices
we achieve the sum of individual advantages. Since the bit $\beta(x)$ can be computed with $O(T) + \tilde{O}(n)$ advice (the complexity of the function $i\rightarrow \beta_i$ plus the complexity of finding $i$ for a given $x$)
we obtain
\begin{corollary}[Computing total advantage by one distinguisher]\label{cor:parition_advantage_final}
For $X,Y$ as in \Cref{cor:no_smooth_implies_euclidean_distance}, $\D$ and $\{S_i\}_i$ defined above 
there exists a modification to $\D$ which in time $\tilde{O}(n)$ and advice $O(T)$ achieves advantage $\frac{1}{3}\cdot T^{\frac{1}{2}}2^{-\frac{k}{2}}\delta$ with probability
$\frac{1}{17}$.
\end{corollary}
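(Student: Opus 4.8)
The plan is to package \Cref{cor:parition_advantage} together with the ``output-flipping'' observation recorded just before the statement, and then to tally the circuit resources. First I would define the modified distinguisher $\hat\D(x) = \beta(x)\D(x)$, where for the unique index $i$ with $x\in S_i$ we set $\beta(x)=\beta_i$ and $\beta_i = \mathrm{sgn}\left(\sum_{x'\in S_i}(P_X(x')-P_Y(x'))\D(x')\right)$. Since $\beta(x),\D(x)\in\{-1,1\}$ we have $\hat\D(x)\in\{-1,1\}$, so $\hat\D$ is an admissible distinguisher. Regrouping the sum defining $\mathsf{Adv}^{\hat\D}(X;Y)$ according to the partition $\{S_i\}_i$ gives
\[
\mathsf{Adv}^{\hat\D}(X;Y)=\left|\sum_{i=1}^{T}\beta_i\sum_{x\in S_i}(P_X(x)-P_Y(x))\D(x)\right|=\sum_{i=1}^{T}\mathsf{Adv}^{\D}_{S_i}(X;Y),
\]
where each inner signed sum turns into its own absolute value after multiplication by $\beta_i$, and the outer absolute value is then vacuous because the right-hand side is a sum of nonnegative terms.

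Next I would invoke \Cref{cor:parition_advantage}: with probability at least $\frac{1}{17}$ over the joint choice of $\D$ and the hash $h$ (equivalently, of the slices $\{S_i\}_i$), the quantity $\sum_{i=1}^{T}\mathsf{Adv}^{\D}_{S_i}(X;Y)$ is at least $\frac{1}{3} T^{1/2}2^{-k/2}\delta$; by the identity above this equals $\mathsf{Adv}^{\hat\D}(X;Y)$, which gives the claimed advantage with the claimed probability. It remains to bound the complexity of $\hat\D$. Evaluating $\D(x)$ costs $\tilde{O}(n)$ since $\D$ is drawn from a $4$-wise independent family (computable, e.g., by degree-$3$ polynomials over $\mathrm{GF}(2^n)$, whose $O(n)$-bit seed is hard-wired); computing the slice index $i=h(x)$ likewise costs $\tilde{O}(n)$ for a $4$-universal hash into $[2^{\lceil\log T\rceil}]$. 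Finally, the map $i\mapsto\beta_i$ is an arbitrary function $[T]\to\{-1,1\}$, so it is supplied as $T$ bits of nonuniform advice and evaluated by a table lookup in time $O(\log T)=\tilde{O}(n)$. Summing, $\hat\D$ runs in time $\tilde{O}(n)$ with $O(T)+\tilde{O}(n)=O(T)$ advice (absorbing the $\tilde{O}(n)$ hash seeds), as stated.

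I do not expect a genuine obstacle, since the corollary is essentially a reformulation of \Cref{cor:parition_advantage} via the flipping trick. The step needing the most care is the regrouping identity: one must take $\beta_i$ to be the sign of the \emph{signed} slice-advantage (not of its absolute value), verify that the global absolute value indeed collapses, and --- crucially for the resource count --- note that the $\beta_i$'s are \emph{hard-wired as advice} rather than recomputed from $P_X$ and $P_Y$ (which would be expensive); this is exactly what makes the nonuniform attack nonuniform. A secondary point of care is keeping the accounting honest about which pieces constitute the $O(T)$ advice and which the $\tilde{O}(n)$ uniform part.
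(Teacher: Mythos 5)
Your proposal matches the paper's own argument: define $\hat\D(x)=\beta(x)\D(x)$ with $\beta_i$ the sign of the signed slice-advantage, observe that regrouping over the partition turns $\mathsf{Adv}^{\hat\D}(X;Y)$ into $\sum_i \mathsf{Adv}^{\D}_{S_i}(X;Y)$, invoke \Cref{cor:parition_advantage}, and account for $O(T)$ advice bits (the $\beta_i$'s) plus $\tilde{O}(n)$ for evaluating the $4$-wise independent $\D$ and the hash $h$. Your version is in fact a bit more careful than the paper's terse paragraph (e.g., it flags that the $\beta_i$'s are hard-wired nonuniformly and explicitly checks $\hat\D$ is $\{-1,1\}$-valued), but it is the same route.
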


Finally by setting $\epsilon =  T^{\frac{1}{2}}2^{-\frac{k}{2}}\delta$ and manipulating $T$ we arrive at
\begin{corollary}[Continue tradeoff]\label{cor:tradeoof}
For any $\epsilon$ there exists $T$ such that the distinguisher in \Cref{cor:parition_advantage_final}
has advantage $\epsilon$ and circuit complexity $s = O\left(2^{k}\epsilon^2\delta^{-2}\right)$.
\end{corollary}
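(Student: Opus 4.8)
The plan is to deduce \Cref{cor:tradeoof} from \Cref{cor:parition_advantage_final} by a change of variables: the latter already exhibits, for \emph{every} slice count $T$, a distinguisher of circuit size $O(T)+\tilde O(n)$ achieving advantage at least $\tfrac13 T^{1/2}2^{-k/2}\delta$ with probability $\tfrac1{17}$ over its internal randomness (so in particular at least one such circuit exists), and this guaranteed advantage is monotone increasing in $T$. Thus, given a target $\epsilon$, I would simply choose $T$ so that $\tfrac13 T^{1/2}2^{-k/2}\delta \approx \epsilon$. Solving $\tfrac13 T^{1/2}2^{-k/2}\delta = \epsilon$ gives $T = 9\cdot 2^k\epsilon^2\delta^{-2}$, and then $O(T) = O(2^k\epsilon^2\delta^{-2})$.

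Concretely, set $T_0 = \lceil 9\cdot 2^k\epsilon^2\delta^{-2}\rceil$ and let $T$ be the least power of two with $T \ge T_0$; a power of two is what \Cref{cor:parition_advantage_final} needs, since its slicing hash maps into $\bin^{\lceil\log T\rceil}$, and this rounding costs at most a factor $2$ in $T$. Plugging this $T$ into \Cref{cor:parition_advantage_final} yields a distinguisher of advantage $\ge \tfrac13 T^{1/2}2^{-k/2}\delta \ge \epsilon$ and circuit size $O(T)+\tilde O(n) = O(2^k\epsilon^2\delta^{-2})+\tilde O(n)$. In the regime $2^k\epsilon^2\delta^{-2} \ge n$ the additive hash overhead $\tilde O(n)$ is absorbed into the main term, giving size $O(2^k\epsilon^2\delta^{-2})$; in general the bound reads $\tilde O(2^k\epsilon^2\delta^{-2})$, consistent with \Cref{thm:main}.

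It remains to handle the two degenerate ranges of $\epsilon$. If $\epsilon$ is so small that $T_0 = 1$, i.e. $\epsilon \le \tfrac13 2^{-k/2}\delta$, take $T = 1$: this is just the unsliced random distinguisher of \Cref{cor:random_attack_advantage}, of size $\tilde O(n)$, which already achieves advantage $\ge \tfrac13 2^{-k/2}\delta \ge \epsilon$. At the other extreme, recall that after the sign flips of \Cref{cor:parition_advantage_final} the total advantage telescopes to a single bias $|\sum_x \Delta(x)\D(x)\beta(x)| \le \sum_x |\Delta(x)| = d_1(P_X;P_Y) \le 2$, so the construction is only meaningful for $\epsilon = O(1)$; for such $\epsilon$ one has $T = O(2^k\delta^{-2}) = O(2^k\epsilon^2\delta^{-2})$, so the claimed bound remains valid.

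I expect no substantive obstacle here: the corollary is essentially a rescaling of \Cref{cor:parition_advantage_final}. The only points that require care are the power-of-two rounding of $T$, the small-$\epsilon$ fallback to $T=1$, and bookkeeping of the additive $\tilde O(n)$ cost of the two $4$-universal hash functions against the main term $O(T)$ — and none of these changes the order of the bound.
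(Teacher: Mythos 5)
Your proof is correct and follows exactly the route the paper intends: the paper's own "proof" is just the one-line remark preceding the corollary ("setting $\epsilon = T^{1/2}2^{-k/2}\delta$ and manipulating $T$"), which is precisely your change of variables $T \approx 9\cdot 2^k\epsilon^2\delta^{-2}$ plugged into \Cref{cor:parition_advantage_final}. Your additional care about rounding $T$ to a power of two, the $T=1$ fallback for tiny $\epsilon$, and the observation that the $\tilde O(n)$ hash cost means the bound is really $\tilde O(2^k\epsilon^2\delta^{-2})$ as in \Cref{thm:main} are all sound bookkeeping the paper glosses over, not deviations from its argument.
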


\section{Omitted Proofs}


\subsection{Proof of \Cref{lemma:Khintchine} (Strengthening of Marcinkiewicz-Zygmund's Inequality for $p=4$)}\label{proof:lemma:Khintchine}

Let $Z = \sum_{x}\xi(x)$. Since $\xi(x)$ are (in particular) $2$-wise independent with zero mean, we get
\begin{align*}
\E\left(\sum_{x} \xi(x)\right)^2  = \sum_{x,y}\E\left( \xi(x)\xi(y) \right) 
= \sum_{x=y}\E\left( \xi(x)\xi(y) \right) = \sum_{x}\mathrm{Var}(\xi(x)).
\end{align*}
(the summation taken over $x,y\in\mathcal{X}$). The fourth moment is somewhat more complicated
\begin{align*}
\E\left(\sum_{x} \xi(x)\right)^4 & = \sum_{x_1,x_2,x_3,x_4}\E\left( \xi(x_1)\xi(x_2)\xi(x_3)\xi(x_4) \right) \\
& = \sum_{x_1=x_2=x_3=x_4}\E\left( \xi(x_1)\xi(x_2)\xi(x_3)\xi(x_4)\right) + \\
& \quad + 3\sum_{x_1=x_2\not=x_3=x_4}\E\left( \xi(x_1)\xi(x_2)\xi(x_3)\xi(x_4)\right) \\
& = \sum_{x}\E\xi(x)^4 + 3\sum_{x\not=y}\E\xi(x)^2\E\xi(y)^2 \\
& = 3\left(\sum_{x}\E \xi(x)^2\right)^2-2\sum_{x}\E \xi(x)^4
\end{align*}
The second equality follows because whenever $\xi(x)$ occurs in an odd power, for example $x=x_1\not=x_2=x_3=x_4$, the expectation is zero (this way one can
 simplify and bound also higher moments, see \cite{DBLP:conf/soda/SchmidtSS93}).
It remains to estimate the first moment. By \Cref{cor:interpolation} and bounds on the second and fourth moment we have just computed we obtain
\begin{align*}
 \frac{1}{\sqrt{3}}\cdot  \left(\sum_{x\in\mathcal{X}}\mathrm{Var}(\xi(x))\right)^{\frac{1}{2}} 
 \leqslant \E \left|\sum_{x\in\mathcal{X}} \xi(x)\right| 
\end{align*}
and the upper bound follows by Jensen's Inequality (with constant 1).


\subsection{Proof of \Cref{lemma:smooth_charact} (Characterizing smooth min-entropy)}
\label{proof:lemma:smooth_charact}
Suppose that $\Hmins{\delta}(X) \geqslant k$.
then, by definition, there is $Y$ such that $\Hmin(Y) \geqslant k$ and
$\sum_{x: P_X(x)>P_Y(x)} P_X(x)-P_Y(x) \leqslant \delta$. Since all the summands are positive and since $P_Y(x) \leqslant 2^{-k}$, ignoring those $x$ for which
$P_Y(x) < 2^{-k}$ yields
\begin{align*}
 \sum_{x: P_X(x)>2^{-k}} P_X(x)-P_Y(x) \leqslant \delta.
\end{align*}
Again, since $P_Y(x) \leqslant 2^{-k}$ we obtain
\begin{align*}
 \sum_{x: P_X(x)>2^{-k}} P_X(x)-2^{-k} \leqslant \delta,
\end{align*}
which finishes the proof of the ''$\Longrightarrow$`` part. 

Assume now that $\delta' = \sum_{x\in\mathcal{X}}\max\left(P_X(x)-2^{-k},0\right) \leqslant \delta$.
Note that 
\begin{align*}
 \sum_{x\in\mathcal{X}}\max\left(P_X(x)-\frac{1}{2^{k}},0\right) & + \sum_{x\in\mathcal{X}}\max\left(\frac{1}{2^{k}}-P_X(x),0\right)  = \\
 & = 
2\sum_{x\in\mathcal{X}}\left|P_X(x)-\frac{1}{2^{k}}\right| 
 \geqslant 2 \sum_{x\in\mathcal{X}}\max\left(P_X(x)-\frac{1}{2^{k}},0\right)
\end{align*}
and therefore we have $\sum_{x\in\mathcal{X}}\max\left(2^{-k}-P_X(x),0\right) \geqslant \delta'$. 
By this observation we can construct a distribution $Y$ by shifting $\delta'$ of the mass of $P_X$ from the set $S^{-}=\{x:P_X(x)>2^{-k}\}$
to the set  $\{x:2^{-k} \geqslant P_X(x)\}$ in such a way that we have $P_{Y}(x) \leqslant 2^{-k}$ for all $x$.
Thus $\Hmin(Y) \geqslant k$ and since a $\delta'$ fraction of the mass is shifted and redistributed we have $d_1(X;Y) \leqslant \delta'$.
This finishes the proof of the ''$\Longleftarrow$`` part.


\subsection{Proof of \Cref{cor:slice_advantages} ((Mixed) moments of slice advantages)}\label{proof:cor:slice_advantages}

For shortness denote $\Delta(x) = P_X(x)-P_Y(x)$ and
$ \mathsf{Adv}^{\D}_{S_i}=\mathsf{Adv}^{\D}_{S_i}\left(X;Y\right)$.

Note that by \Cref{lemma:Khintchine}, applied to
the family $f_x = \Delta(x)\D(x)\mathbf{1}_{S_i}(x)$ (which is $4$-wise independent) we have
\begin{align*}
   \E \mathsf{Adv}^{\D}_{S_i} \geqslant 3^{-\frac{1}{2}}\left(\sum_{x}\Delta(x)^2\right)^{\frac{1}{2}}
\end{align*}
which is the first inequality claimed in the corollary.\Note{how does $T$ come in here?}
In turn, again by \Cref{lemma:Khintchine}, we have
\begin{align*}
 \E\left( \mathsf{Adv}^{\D}_{S_i}\right)^2 = T^{-1} \cdot \sum_{x}\Delta(x)^2.
\end{align*}
Since this holds for any $i$, by Cauchy-Schwarz we get for any $i,j$
\begin{align*}
 \E\mathsf{Adv}^{\D}_{S_i}  \mathsf{Adv}^{\D}_{S_j} \leqslant \sqrt{\E\left( \mathsf{Adv}^{\D}_{S_i}\right)^2\cdot \E\left( \mathsf{Adv}^{\D}_{S_j}\right)^2 } \leqslant T^{-1} \cdot \sum_{x}\Delta(x)^2.
\end{align*}
which proves the second inequality in the corollary.

\printbibliography

\end{document}